\newcommand{\wn}{\nu}
\newcommand{\et}[1]{e^{\imath #1 \theta_0}}
\newcommand{\etm}[1]{e^{-\imath #1 \theta_0}}
\newcommand{\coeff}[1]{\alpha_{#1}^{(n)}}
\newcommand{\xw}{X^{(n)}}
\newcommand{\uw}{U^{(n)}}
\newcommand{\vw}{V^{(n)}}
\newcommand{\yw}{Y^{(n)}}
\newcommand{\amm}{a_-^{-1}}
\newcommand{\amp}{a_-}
\newcommand{\apm}{a_+^{-1}}
\newcommand{\app}{a_+}
\newcommand{\ucoeff}[1]{\alpha_{#1}^{(n)}}
\newcommand{\ts}{(<)}
\newcommand{\tl}{(>)}
\newcommand{\norm}[1]{\left\lVert#1\right\rVert}
\newcommand{\normb}[1]{\left\lVert#1\right\rVert_1}
\newtheorem{theorem}{Theorem}
\newtheorem{lemma}{Lemma}
\numberwithin{equation}{section}
\begin{document}
	
\title{Asymptotic behavior of Toeplitz determinants with {a delta function singularity}}

\author{Vanja Mari\'{c}}
\affiliation{Division of Theoretical Physics, Ru\dj{}er Bo\v{s}kovi\'{c} Institute, Bijeni\u{c}ka cesta 54, 10000 Zagreb, Croatia}
\affiliation{SISSA and INFN, via Bonomea 265, 34136 Trieste, Italy.}

\author{Fabio Franchini}
\affiliation{Division of Theoretical Physics, Ru\dj{}er Bo\v{s}kovi\'{c} Institute, Bijeni\u{c}ka cesta 54, 10000 Zagreb, Croatia}

\begin{abstract}
We find the asymptotic behaviors of Toeplitz determinants with symbols which are a sum of two contributions: one analytical and non-zero function in an annulus around the unit circle, and the other proportional to a Dirac delta function. The formulas are found by using the Wiener-Hopf procedure. The determinants of this type are found in computing the spin-correlation functions in low-lying excited states of some integrable models, where the delta function represents a peak at the momentum of the excitation. As a concrete example of applications of our results, using the derived asymptotic formulas we compute the spin-correlation functions in the lowest energy band
of the frustrated quantum XY chain in zero field, and the ground state magnetization.
\end{abstract}

\preprint{RBI-ThPhys-2020-16}

\maketitle

\section{Introduction}\label{section Introduction}
We consider Toeplitz determinants
\begin{equation}\label{first}
\tilde{D}_n(\tilde{f})=\det \big(\tilde{f}^{(n)}_{j-k}\big)_{j,k=1}^{n} \; , \qquad \tilde{f}_j^{(n)}=\frac{1}{2\pi} \int_{0}^{2\pi} \tilde{f}(\theta,n) e^{-\imath j \theta} d \theta
\end{equation}
with a symbol 
\begin{equation}\label{symbol}
\tilde{f}(\theta,n)=f(e^{\imath \theta} ) \big[1+{2\pi}z_n \ \delta(\theta-\theta_0)\big]
\end{equation}
Here $\delta$ is Dirac delta function, $\theta_0\in[0,2\pi)$, $(z_n)_{n\in\mathbb{N}}$ is an arbitrary sequence in $\mathbb{C}$ and $f$ is a continuous function on the unit circle.

It follows that for $\theta_0\neq0$ the elements {$\tilde{f}_j^{(n)}$} are equal to
\begin{equation}\label{elements tilde}
\tilde{f}_j^{(n)}=f_j+z_n f(e^{\imath \theta_0}) e^{-\imath j \theta_0},
\end{equation}
where
\begin{equation}
f_j=\frac{1}{2\pi}\int_0^{2\pi}f(e^{\imath \theta})e^{-\imath j \theta} d\theta \; .
\end{equation}
For $\theta_0=0$ there is an ambiguity in the delta function integral. In this case we use \eqref{elements tilde} to define the coefficients {$\tilde{f}_j^{(n)}$} and the Toeplitz matrix $\tilde{D}_n(\tilde{f})$.

We restrict ourselves to symbols $f$ that are non-zero and analytic in an annulus including the unit circle. A general such  symbol can be written as
\begin{equation}\label{symbol f}
f(z)=a(z)z^\wn,
\end{equation}
where $a$ is a function that is analytic and non-zero on the annulus and has zero winding number, while $\wn\in\mathbb{Z}$ is the winding number of the symbol $f$ {(see e.g. \cite{Bottcher1999, Rao1991})}.

We are interested in asymptotic formulas for $\tilde{D}_n(\tilde{f})$ {as $n\to\infty$}. The asymptotic formulas for
\begin{equation}
D_n(f)=\det (f_{j-k})_{j,k=1}^{n}
\end{equation}
for analytic non-zero symbols $f$ are by now considered classical, and exist under much more general conditions. For $\wn=0$ they are given by the strong Szeg\H{o} limit theorem (originally proven in \cite{Szego}, for a review of later developments see \cite{DeiftItsKrasovsky,Simon_orthogonal,Bottcher_book}). The asymptotic formulas for nonzero $\wn$ have been first obtained in \cite{Hartwig1969,FisherHartwig}, and later under different conditions and using different methods in \cite{BottcherSilbermann,BottcherWidom}. The delta function in the symbol \eqref{symbol} might also be considered, in a suitable limit, as a singularity of the symbol, different from the widely studied Fisher-Hartwig singularities (for a review see \cite{Krasovsky_aspects}).

We decided to solve this problem motivated by the appearance of determinants of type \eqref{first} in spin-correlation functions of certain low energy states in quantum spin chains mappable to free fermonic systems. In such instances, the determinant $D_n(f)$ reflects the ground state correlations (in absence of frustration) and the delta function of the symbol has a peak at the momentum of the fermionic excitation on the vacuum state. For chains with boundary frustration, the spin-correlation functions in the lowest admissible state are already determined by \eqref{first}, where $\theta_0$ emerges as the mode minimizing the energy and lying at the bottom of a band of states (in the thermodynamic limit) \cite{DongPRE,Dong_2016,mari2019frustration,mari2020frustration,Giampaolo:2018imc}. The leading asymptotic term for particular determinants of the type \eqref{first} in the case $\wn=0$ was found in \cite{Dong_2016} in the context of the frustrated quantum Ising chain, without discussion of the subleading terms and without rigor: providing a reliable proof has been the initial inspiration for this work, together with the possibility of extending the conditions of applicability.

To introduce the notation,
we are first going to review some results on the asymptotics of the determinant $D_n(f)$ where $f$ is non-zero and analytic in an annulus around the unit circle. The asymptotic formulas of \cite{FisherHartwig,Hartwig1969} are appropriate for this case. The function $a(z)=\sum_{k=-\infty}^{\infty}a_k z^k$, defined in \eqref{symbol f}, is analytic in an annulus including the unit circle so
\begin{equation}\label{rho plus minus}
\limsup_{k\to\infty}|a_{-k}|^{1/k}=\rho_{-}<1<\rho_+=\liminf_{k\to\infty}|a_k|^{-1/k} \; .
\end{equation}
{An analytic logarithm of $a$ on $\rho_-<|z|<\rho_+$ exists so we can introduce the Wiener-Hopf factorization (see e.g. \cite{Bottcher1999,TheTwo-DimensionalIsingModel})}
\begin{equation}\label{apmdef}
\amp(z)=\exp \sum_{k=1}^{\infty} (\log a)_{-k}z^{-k}  , \quad \app(z)=\exp \sum_{k=0}^{\infty} (\log a)_{k}z^{k} ,
\end{equation}
where $\log a (z)= \sum_{k=-\infty}^{\infty} (\log a)_{k} z^{ k}$ and thus $a=a_-a_+$. We also introduce the functions
\begin{equation}
b=a_-a_+^{-1} , \quad c=a_+a_{-}^{-1}, \qquad b \ c =1 \; .
\end{equation}
For expressing the subleading terms in the asymptotic formulas for $D_n(f)$ it is useful to define $\rho\in(0,1)$ by
\begin{equation}\label{rho definition}
\rho_{-}<\rho<1<\rho^{-1}<\rho_+ \; .
\end{equation}
The function $a$ is then analytic on $\rho\leq|z|\leq\rho^{-1}$ and $a_j=O(\rho^{j}), \; a_{-j}=O(\rho^j)$, for $j\geq 0$, for all $\rho$ satisfying \eqref{rho definition}. Analogous relations hold for the functions $b$ and $c$.

The asymptotic behavior of $D_n(f)$ in the case of zero winding number of the symbol ($\wn=0$) is given by the strong Szeg\H{o} limit theorem. The version of \cite{FisherHartwig,Hartwig1969} in the case of analytic symbol reads
\begin{equation}\label{FH zero}
D_n(a)=\exp\big[n(\log a)_0 +\sum_{k=1}^{\infty}k (\log a)_k (\log a)_{-k} +O(\rho^{2n})\big] \quad {\textrm{as $n\to\infty$.}}
\end{equation}
We note that in the same reference an explicit formula for the term $O(\rho^{2n})$ in \eqref{FH zero} is given, up to corrections $O(\rho^{4n})$.

For $\wn\neq 0$ the asymptotic behavior of $D_n(f)$ is determined by the asymptotic behavior of $D_n(a)$ and the determinant of the $|\wn|\times|\wn|$ Toeplitz matrix, defined by
\begin{equation}\label{Delta}
\Delta_{\wn,n}=\det\big(d^{(n)}_{j-k}\big)
_{j,k=1}^{|\wn|} \ ,\quad \textrm{where } \quad d_j^{(n)}=\begin{cases} b_{n+j} &\textrm{for } \wn <0\\
c_{-n-j} &\textrm{for } \wn >0
\end{cases} .
\end{equation}
The asymptotic formula is
\begin{equation}\label{FH nonzero}
D_n\big(f\big)=(-1)^{n\wn}D_{n+|\wn|}(a)\big[\Delta_{\wn,n}+O(\rho^{n(|\wn|+3)}) \big] .
\end{equation}
The formula \eqref{FH nonzero} follows from the more precise result of \cite{FisherHartwig,Hartwig1969}{, given by Theorem 4 in \cite{Hartwig1969} and Theorem 6 in \cite{FisherHartwig}.}

We will extend these formulas to determinants of type \eqref{first}. In order to do so, we first define the determinants $\tilde{\Delta}_{\wn,n}(l)$, for $l=1,2,...,|\wn|$, for $\wn\neq 0$, as the determinants of the matrix {$(d_{j-k}^{(n)})_{j,k=1}^{|\wn|}$} with the column $l$ replaced by the vector $(1,e^{-\imath\theta_0},e^{-\imath 2\theta_0},...,e^{-\imath (|\wn|-1) \theta_0})^{\textrm{T}}$ for $\wn<0$ and by $(1,e^{\imath\theta_0},e^{\imath 2\theta_0},...,e^{\imath (\wn-1) \theta_0})^{\textrm{T}}$ for $\wn>0$. This definition can be written as
\begin{equation}\label{Delta tilde l}
\tilde{\Delta}_{\wn,n}(l)=\det\big(\tilde{d}^{(l)}_{j,k}\big)_{j,k=1}^{|\wn|} , \quad \textrm{where }
\tilde{d}_{j,k}^{(l)}=
\begin{cases}
(1-\delta_{k,l}) { d_{j-k}^{(n)} } +\delta_{k,l}e^{-\imath (j-1)\theta_0} &\textrm{for } \wn<0 \\ 
(1-\delta_{k,l}) {d_{j-k}^{(n)}  } +\delta_{k,l}e^{\imath (j-1)\theta_0} &\textrm{for } \wn>0 
\end{cases}.
\end{equation}

The main results of this work are the following two theorems.
\begin{theorem}\label{tm zero winding}
	For $\wn=0$ we have
	\begin{equation}
\tilde{D}_n(\tilde{f})=D_n(a)\bigg\{1+z_n\bigg[ n+ \imath\frac{d}{d \theta}\log b(e^{\imath \theta}) \bigg|_{\theta=\theta_0}+O(\rho^n)  \bigg] \bigg\}    \quad \textrm{as } n\to \infty,
	\end{equation}
	where $\rho$ is defined by \eqref{rho definition}.
\end{theorem}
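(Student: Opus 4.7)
The first observation is that $\tilde f^{(n)}_{j-k} = a_{j-k} + z_n a(e^{\imath\theta_0}) e^{-\imath(j-k)\theta_0}$ (recalling $f=a$ when $\wn=0$), so the matrix $\tilde T_n := (\tilde f^{(n)}_{j-k})_{j,k=1}^n$ differs from the standard Toeplitz matrix $T_n(a) := (a_{j-k})_{j,k=1}^n$ by a rank--one additive perturbation. Setting $u,v\in\mathbb{C}^n$ with $u_j = e^{-\imath j\theta_0}$ and $v_k = e^{\imath k\theta_0}$, the perturbation is $z_n a(e^{\imath\theta_0})\,uv^T$, and the matrix determinant lemma yields
\begin{equation*}
\tilde D_n(\tilde f) = D_n(a)\bigl[1 + z_n a(e^{\imath\theta_0})\,Q_n\bigr], \qquad Q_n := v^T T_n(a)^{-1} u = \langle T_n(a)^{-1}u, u\rangle.
\end{equation*}
Since \eqref{FH zero} already supplies the asymptotics of $D_n(a)$, the theorem reduces to establishing $a(e^{\imath\theta_0})\,Q_n = n + \imath\frac{d}{d\theta}\log b(e^{\imath\theta})\big|_{\theta=\theta_0} + O(\rho^n)$ as $n\to\infty$.

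To evaluate $Q_n$ I would use the Wiener--Hopf factorization $a=\amp\app$ of \eqref{apmdef}. On the semi--infinite space $\ell^2(\mathbb{N})$ one has the exact identity $T(a)^{-1} = T(\apm)T(\amm)$, because $\apm$ has only non--negative and $\amm$ only non--positive Fourier modes. For the finite compression this identity fails; the discrepancy is expressible through Hankel operators whose entries decay like $\rho^k$ thanks to the annulus of analyticity of $a$, and hence contribute $O(\rho^n)$ to the bilinear form $Q_n$. Granting this bound, $T_n(a)^{-1}$ may be replaced by $T_n(\apm)T_n(\amm)$ inside $Q_n$ at an $O(\rho^n)$ cost.

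A direct computation then exploits that $u$ is a truncated plane wave. One finds
\begin{equation*}
(T_n(\amm)u)_l = e^{-\imath l\theta_0}\bigl[\amm(e^{\imath\theta_0}) - R_l^-\bigr], \quad R_l^- := \sum_{m\le l-n-1}(\amm)_m e^{\imath m\theta_0},
\end{equation*}
and an analogous expression $[u^* T_n(\apm)]_l = e^{\imath l\theta_0}[\apm(e^{\imath\theta_0}) - R_l^+]$ with $R_l^+ := \sum_{m>n-l}(\apm)_m e^{\imath m\theta_0}$. Both corrections $R_l^\pm$ are exponentially small for $l$ away from $n$ but of order one for $l\sim n$. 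The bulk product $\apm(e^{\imath\theta_0})\amm(e^{\imath\theta_0}) = 1/a(e^{\imath\theta_0})$, summed over $l=1,\ldots,n$, contributes the leading $n/a(e^{\imath\theta_0})$; the cross terms reduce to the summations $\sum_l R_l^\pm$, each of which is computed by counting, for each Fourier index $m$, the multiplicity with which the cutoff activates it:
\begin{equation*}
\sum_{l=1}^n R_l^+ = e^{\imath\theta_0}(\apm)'(e^{\imath\theta_0}) + O(\rho^n), \qquad \sum_{l=1}^n R_l^- = -e^{\imath\theta_0}(\amm)'(e^{\imath\theta_0}) + O(\rho^n).
\end{equation*}
Assembling these contributions, and using the identities $(\amm)'/\app = -(\log\amp)'/a$ and $(\apm)'/\amp = -(\log\app)'/a$ together with $(\log b)' = (\log\amp)' - (\log\app)'$ (from $b=\amp\app^{-1}$), one arrives at $a(e^{\imath\theta_0})\,Q_n = n - e^{\imath\theta_0}(\log b)'(e^{\imath\theta_0}) + O(\rho^n)$. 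Converting the $z$--derivative into a $\theta$--derivative via $e^{\imath\theta}(\log b)'(e^{\imath\theta}) = -\imath\frac{d}{d\theta}\log b(e^{\imath\theta})$ recovers precisely the claimed expansion.

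The main technical obstacle, in my view, is the rigorous justification that the Wiener--Hopf discrepancy $T_n(a)^{-1} - T_n(\apm)T_n(\amm)$ contributes only $O(\rho^n)$ to the quadratic form $Q_n$. This relies on expressing the discrepancy through two ``corner'' Hankel matrices whose entries inherit the $\rho$-exponential decay from the analyticity of $a_\pm$---essentially the same input that underlies the $O(\rho^{2n})$ error term in \eqref{FH zero}. Once that bound is in place, the remaining steps (the index-counting of the boundary summations and the algebraic rearrangement of derivatives) are completely elementary.
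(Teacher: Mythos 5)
Your reduction of the determinant to the bilinear form $Q_n=v^TT_n(a)^{-1}u$ is equivalent to the paper's own first step (multilinearity plus Cramer's rule gives exactly $\tilde D_n=D_n(a)\,[1+z_na(\et{})X^{(n)}(\et{})]$ with $X^{(n)}(\et{})=Q_n$), and your final algebraic assembly does reproduce the claimed constant. However, the central technical claim on which your evaluation of $Q_n$ rests is false: the discrepancy $E_n=T_n(a)^{-1}-T_n(\apm)T_n(\amm)$ does \emph{not} contribute $O(\rho^n)$ to the quadratic form. Note that $T_n(\apm)T_n(\amm)=P_nT(a)^{-1}P_n$ exactly (by triangularity), so $E_n$ is the finite-section error; its entries are concentrated in the far corner and of size roughly $\rho^{(n-j)+(n-k)}$, which upon pairing with the unimodular vectors $u,v$ sums to $O(1)$, not $O(\rho^n)$, and the oscillating phases cancel rather than help. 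A concrete check: for $a(z)=(1-\lambda z)(1-\mu z^{-1})$ one has $T_n(a)=T_n(a_-)T_n(a_+)+\lambda\mu\,e_{n-1}e_{n-1}^T$, and Sherman--Morrison gives
\begin{equation*}
v^TE_nu=-\frac{\lambda\mu}{(1-\lambda\mu)\big(1-\lambda \et{}\big)\big(1-\mu \etm{}\big)}+O(\rho^n),
\end{equation*}
a nonzero $O(1)$ quantity.

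There is a second, compensating omission: expanding $\big[\apm(\et{})-R_l^+\big]\big[\amm(\et{})-R_l^-\big]$ produces, besides the bulk and the two linear cross terms you keep, the quadratic term $\sum_{l}R_l^+R_l^-$. Since both $R_l^{\pm}$ are $O(1)$ for $l$ near $n$, this sum is also a nonzero $O(1)$ quantity (in the example above it equals $\lambda\mu/[(1-\lambda\mu)(1-\lambda\et{})(1-\mu\etm{})]+O(\rho^n)$, exactly the negative of $v^TE_nu$). Your proposal drops this term and simultaneously asserts the false $O(\rho^n)$ bound on $E_n$; the two $O(1)$ errors cancel, which is why you land on the correct formula, but the argument as written does not establish it. To repair this route you would have to keep both contributions and prove their cancellation to exponential accuracy, e.g.\ via an exact Widom-type formula for $T_n(a)^{-1}$ whose corner correction can be paired explicitly against the plane waves. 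The paper sidesteps this entirely: it constructs an explicit approximate solution $X_1^{(n)}$ of the functional equation $aX^{(n)}=Y^{(n)}+U^{(n)}z^n+V^{(n)}$ and controls $X^{(n)}-X_1^{(n)}$ through a stability estimate for the Wiener--Hopf system (Lemma \ref{lemma rigor}), so no such delicate cancellation ever has to be exhibited.
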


\begin{theorem}\label{tm non-zero winding}
	Let $\wn\neq 0$ and suppose $D_n(f)\neq0$, $\Delta_{\wn,n}\neq 0$, for $n\geq n_0$, $n_0\in\mathbb{N}$. If for sufficiently small $\rho$, satisfying \eqref{rho definition}, we have
	\begin{equation}\label{condition tm}
	\lim_{n\to\infty}\frac{\tilde{\Delta}_{\wn,n}(j)}{\Delta_{\wn,n}}\rho^{2n}=0  \quad \textrm{for all }j\in\{1,2,...,|\wn|\}\;,
	\end{equation}
	where $\tilde{\Delta}_{\wn,n}(j)$ are defined by \eqref{Delta tilde l},
	then for $n\geq n_0$
	\begin{align}
	&\tilde{D}_n(\tilde{f})=D_n(f)\bigg\{1+z_n\bigg[-b(e^{\imath \theta_0}) \etm{(n+1)}\sum_{j=1}^{|\wn|} \frac{\tilde{\Delta}_{\wn,n}(j)}{\Delta_{\wn,n}}\Big(\et{j}+O(\rho^n)\Big)+n+O(1)\bigg]\bigg\}  &&\textrm{if }\wn<0, \label{non-zero winding formula 1}\\
	&\tilde{D}_n(\tilde{f})=D_n(f)\bigg\{1+z_n\bigg[-c(e^{\imath \theta_0}) \et{(n+1)}\sum_{j=1}^{\wn} \frac{\tilde{\Delta}_{\wn,n}(j)}{\Delta_{\wn,n}}\Big(\etm{j}+O(\rho^n)\Big)+n+O(1)\bigg]\bigg\} &&\textrm{if } \wn>0. \label{non-zero winding formula 2}
	\end{align}
\end{theorem}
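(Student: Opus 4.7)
The starting observation is that $\tilde D_n(\tilde f)$ is a rank-one update of the classical Toeplitz matrix. From \eqref{elements tilde}, the matrix $(\tilde f^{(n)}_{j-k})$ differs from $T_n(f):=(f_{j-k})_{j,k=1}^n$ by $z_n f(e^{\imath\theta_0})\,u\,w^{T}$, where $u=(e^{-\imath j\theta_0})_{j=1}^n$ and $w=(e^{\imath k\theta_0})_{k=1}^n$. Under the hypothesis $D_n(f)\ne 0$, the matrix determinant lemma gives
\begin{equation*}
\tilde D_n(\tilde f)=D_n(f)\bigl[\,1+z_n\,f(e^{\imath\theta_0})\,Q_n\,\bigr],\qquad Q_n:=w^{T}T_n(f)^{-1}u=\sum_{k=1}^n y_k\,e^{\imath k\theta_0},
\end{equation*}
where $y$ is the solution of the finite Toeplitz system $T_n(f)\,y=u$. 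The entire task therefore reduces to producing an asymptotic expansion of $Q_n$.

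To solve this system I would apply the Wiener--Hopf procedure based on the factorization $f=a_-a_+ z^{\wn}$. Extending $y$ by zero outside $\{1,\dots,n\}$ and convolving with the Fourier coefficients of $a_+^{-1}$ and $a_-^{-1}$, one decomposes the equation $(T_n(f)y)_j=u_j$ into pieces analytic inside and outside the unit circle. In the zero-winding case this procedure closes immediately and leads to Theorem~\ref{tm zero winding}. For $\wn\ne 0$ the operator with symbol $z^{\wn}$ has Fredholm index $-\wn$, so the scheme leaves $|\wn|$ undetermined \emph{residual} unknowns---sitting at the left end for $\wn<0$ and at the right end for $\wn>0$. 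The matching conditions needed to fix them take the form of an $|\wn|\times|\wn|$ linear system whose matrix is exactly $(d^{(n)}_{j-k})_{j,k=1}^{|\wn|}$ from \eqref{Delta}, and whose right-hand side, generated by the source $u$, is, after a direct computation of Fourier coefficients at $\theta_0$, precisely the exponential column $(1,e^{\mp\imath\theta_0},\dots,e^{\mp\imath(|\wn|-1)\theta_0})^{T}$ appearing in the definition \eqref{Delta tilde l}.

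The third step is to solve this auxiliary system by Cramer's rule, whose components are $\tilde\Delta_{\wn,n}(l)/\Delta_{\wn,n}$, and to substitute the result back into the Wiener--Hopf formula for $y$. The resulting expansion of $Q_n$ splits into a bulk piece, coming from the interior action of $(a_-a_+)^{-1}$ on the smooth source $u$, which after multiplication by $f(e^{\imath\theta_0})$ contributes the $n+O(1)$ term inside the brackets, and a boundary piece carrying the factor $\sum_l\tilde\Delta_{\wn,n}(l)/\Delta_{\wn,n}$. The prefactors $-b(e^{\imath\theta_0})\etm{(n+1)}$ for $\wn<0$ and $-c(e^{\imath\theta_0})\et{(n+1)}$ for $\wn>0$ emerge from evaluating the Wiener--Hopf kernel at $\theta_0$ together with the index shift by $n+1$ produced by the opposite-end residuals; the swap $b\leftrightarrow c$ and the accompanying conjugation of exponentials reflect the fact that for $\wn>0$ it is the right end rather than the left end that carries the residuals. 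The $O(\rho^n)$ errors inside the brackets come from the geometric decay of the Fourier coefficients of $a_\pm^{\pm 1}$, $b$ and $c$ on the annulus $\rho<|z|<\rho^{-1}$; the hypothesis \eqref{condition tm} is imposed precisely so that multiplying these remainders by the potentially large ratios $\tilde\Delta_{\wn,n}(l)/\Delta_{\wn,n}$ still leaves them within the $O(1)$ budget stated in the theorem.

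The main obstacle, I expect, is the bookkeeping in the second step: isolating the correct $|\wn|$ residual unknowns, writing the matching conditions in closed form, and recognising the resulting matrix as $(d^{(n)}_{j-k})$. The two sign cases require separate but parallel treatments, and one must verify carefully that the source column generated by $u_j=e^{-\imath j\theta_0}$ is exactly the exponential vector in \eqref{Delta tilde l}. Once this alignment is established the remainder is routine manipulation with Cramer's rule and geometric tail bounds, in close analogy with the derivation of \eqref{FH nonzero} from the Wiener--Hopf scheme in \cite{FisherHartwig,Hartwig1969}.
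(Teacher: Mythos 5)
Your proposal follows essentially the same route as the paper: the rank-one update (which the paper phrases via multilinearity of the determinant) reduces everything to the Toeplitz system $T_n(f)y=u$ in \eqref{linear problem}, which is then solved asymptotically by the Wiener--Hopf factorization $f=a_-a_+z^{\wn}$, with the $|\wn|$ residual unknowns fixed by an $|\wn|\times|\wn|$ system whose matrix is $(d^{(n)}_{j-k})$ and whose Cramer's-rule solution produces the ratios $\tilde{\Delta}_{\wn,n}(l)/\Delta_{\wn,n}$, exactly as in Appendix \ref{appendix solution} and Section \ref{sec non zero winding}. The only point worth flagging is that, as the paper itself concedes, the final step --- that the explicitly constructed approximate solution differs from the true $X^{(n)}$ only by the claimed exponentially small error --- is not proved rigorously for $\wn\neq 0$ (the rigorous perturbation Lemma \ref{lemma rigor} covers only $\wn=0$), so your "routine geometric tail bounds" conceal the one genuinely unresolved issue in this argument.
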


Compared to the usual behavior of Toeplitz determinants without delta-function singularities, in this case we see the emergence of algebraic contributions in the matrix rank $n$. {The terms inside the sums in \eqref{non-zero winding formula 1} an \eqref{non-zero winding formula 2} are expected to be of the order of magnitude of the inverse of the coefficients $d_j^{(n)}$ for $|j|< |\wn|$.  Therefore, since $d_j^{(n)}=O(\rho^n)$ as $n\to\infty$, they are expected typically to grow faster than $\rho^{-n}$, with exponentially suppressed subleading corrections.}

{We are going to derive these theorems by relating the determinant to a linear problem, which in turn can be expressed as a linear functional equation, whose solution can be obtained in the limit of large $n$. The leading contributions to this solution, obtained using the Wiener-Hopf procedure, provide the asymptotic formulas above for the determinant. The estimates on the order of magnitude of the corrections to the terms we have calculated explicitly rely on the intuitively clear property that a small perturbation to the functional equation yields a change to the solution of a similar order of magnitude. For the case of a zero winding number of the symbol we have also provided a rigorous proof of this statement, which makes Theorem \ref{tm zero winding} rigorous. For the case of a non-zero winding number we have not provided a rigorous proof, but the corrections are plausible for the same reasons and we have confirmed Theorem \ref{tm non-zero winding} numerically on a few relevant examples. The rigorous analysis of corrections is better established in the framework of the Riemann-Hilbert problem, an alternative technique that can be used, which will be the subject of another work.}

{We derive the theorems in section \ref{section derivation}.} To give a concrete example of applications of these theorems (and to explicitly show the unusual behavior of these determinants) in section \ref{section application} we compute the spin-correlation functions in the lowest energy band of the frustrated quantum XY chain in zero magnetic field, and the ground state magnetization.

\section{Derivation of the Theorems}\label{section derivation}

\subsection{Linear Problem}
The first step in the derivation of the theorems is to use \eqref{elements tilde} and the basic property that determinant is alternating mutilinear function of its columns, to expand $\tilde{D}_n(\tilde{f})$ as
\begin{equation}\label{expanded}
\tilde{D}_n(\tilde{f})=D_n(f)+z_nf(e^{\imath \theta_0})\sum_{j=1}^{n} e^{\imath (j-1)\theta_0} D_{n,j}(f),
\end{equation} 
where by $D_{n,j}(f)$ we denote the determinant obtained by replacing the column $j$ in $D_n(f)$ by the column vector {$(1,e^{-\imath\theta_0},e^{-\imath 2\theta_0},...,e^{-\imath(n-1)\theta_0})^{\textrm{T}}$ }.

We assume that there is $n_0\in\mathbb{N}$ such that $D_n(f)\neq0$ for $n\geq n_0$. In the case $\wn=0$ this assumption is justified by the Szeg\H{o} theorem, while for nonzero $\wn$ we restrict ourselves to symbols for which this assumption holds. This assumption implies that there exists a unique solution $x_j^{(n)}$, $j=0,1,...,n-1$, of the linear problem
\begin{equation}\label{linear problem}
\sum_{k=0}^{n-1} f_{j-k}x_k^{(n)}=e^{-\imath\theta_0j}, \quad \textrm{for } j=0,1,...,n-1,
\end{equation}
that is by Cramer's rule given by
\begin{equation}
x_j^{(n)}=\frac{D_{n,j+1}(f)}{D_n(f)} \; .
\end{equation}
This solution can be inserted in \eqref{expanded} to get
\begin{equation}
\tilde{D}_n(\tilde{f})=D_n(f)\Big[{1}+z_nf(e^{\imath \theta_0})\sum_{j=0}^{n-1} e^{\imath \theta_0j} x_j^{(n)} \Big].
\end{equation}
Defining the analytic function
\begin{equation}\label{X definition}
X^{(n)}(z)=\sum_{j=0}^{n-1}x_j^{(n)}z^j
\end{equation}
we have
\begin{equation}\label{determinant in terms of X}
\tilde{D}_n(\tilde{f})=D_n(f)\Big(1+z_nf(e^{\imath \theta_0})X^{(n)}(e^{\imath\theta_0}) \Big) .
\end{equation}

We are thus going to find an asymptotic formula for $X^{(n)}(z)$, and hence for $\tilde{D}_n(\tilde{f})$, by using the Wiener-Hopf procedure, similar to the one of \cite{Wu1966,TheTwo-DimensionalIsingModel} used to compute the spin correlation functions of the Ising model. The prerequisites and details are given in the following sections.

{Let us comment that the presented method of relating the determinant to a linear problem would not work with more than one delta function in the symbol \eqref{symbol}. In arriving to expression \eqref{expanded} we have used the property that different column vectors resulting from the delta function in the symbol are one scalar multiple of the other. This has lead to many cancellations, since determinant is an alternating function. With more delta functions, column vectors arising from different delta functions could not anymore be related simply by scalar multiplication, which prevents cancellations.
}

\subsection{Equivalent Problem}\label{subsec equivalent}
For a function $g(z)=\sum_{j=-\infty}^{\infty}g_j z^j$, defined and analytic in an annulus $\rho_{-}<|z|<\rho_+$ including the unit circle, we define its components
\begin{equation}\label{components definition}
\left[g\right]_-(z)=\sum_{j=1}^{\infty}g_{-j}z^{-j} ,\quad \left[g\right]_+(z)=\sum_{j=0}^{\infty}g_jz^j .
\end{equation}
The function $\left[g\right]_-$ is analytic on {$|z|>\rho_-$}, while the function $\left[g\right]_+$ is analytic on {$|z|<\rho_+$}. For definiteness, in the following we are going to restrict the domain of these functions to the annulus $\rho_-<|z|<\rho_+$, where both are analytic. 

As shown in Appendix \ref{appendix existence}, defining the coefficients
\begin{equation}\label{y definition}
y_j^{(n)}=\begin{cases}
e^{-\imath j\theta_0} , & j\in\{0,1,...,n-1\}\\
0, & j\in\mathbb{Z}\backslash\{0,1,...,n-1\}
\end{cases}
\end{equation}
and the analytic function
\begin{equation}\label{Y definition}
Y^{(n)}(z)=\sum_{j=0}^{n-1}y_j^{(n)}z^j 
\end{equation}
solving the linear problem \eqref{linear problem} is equivalent to finding functions $X^{(n)}, U^{(n)}, V^{(n)}$, defined and analytic on an annulus including the unit circle, satisfying
\begin{equation}\label{equivalent problem}
fX^{(n)}=Y^{(n)}+U^{(n)}z^{n}+V^{(n)}
\end{equation}
and having the properties
\begin{equation}\label{properties}
\big[X^{(n)}\big]_-=0,\quad \big[X^{(n)}z^{-n}\big]_+=0,\quad \big[U^{(n)}\big]_-=0, \quad \big[V^{(n)}\big]_+=0 \; .
\end{equation}
With $D_n(f)\neq 0$ the solution exists and is unique, with $X^{(n)}$ corresponding to \eqref{X definition}. By solving this equivalent problem we find the asymptotic formula for $X^{(n)}(\et{})$ in \eqref{determinant in terms of X}.

\subsection{Evaluating the components}\label{subsec evaluating}

Following the standard Wiener-Hopf approach, we seek the solution of \eqref{equivalent problem} by exploiting the different analytical properties of the different components of the functions appearing in it.
The components \eqref{components definition} can be evaluated as the following integrals. Let $z$ belong to the annulus  $\rho_-<|z|<\rho_+$, and let $\rho_1\in(\rho_-,|z|), \rho_2\in(|z|,\rho_+)$. Then
\begin{equation}\label{plus minus integral representation}
\left[g\right]_-(z)=\frac{1}{2\pi\imath}\oint_{|w|=\rho_1}\frac{g(w)}{z-w} dw , \qquad \left[g\right]_+(z)=\frac{1}{2\pi\imath}\oint_{|w|=\rho_2}\frac{g(w)}{w-z} dw .
\end{equation}
These formulas can be shown by summing, in accordance with definition \eqref{components definition}, the Laurent series coefficients
\begin{equation}\label{Laurent coefficients}
g_k=\frac{1}{2\pi\imath}\oint_{|w|=\rho_1}g(w)\frac{dw}{w^{k+1}}=\frac{1}{2\pi\imath}\oint_{|w|=\rho_2}g(w)\frac{dw}{w^{k+1}}
\end{equation}
and interchanging the sum and the integral, which is valid since the Laurent series is uniformly convergent on every closed subannulus in the interior of its annulus.

In the derivation of the theorems we are going to encounter functions $G$, analytic on the annulus $\rho_-<|z|<\rho_+$, that are of the form
\begin{equation}\label{definition G}
G(z)=\frac{g(z)-g(e^{\imath\theta_0})}{z-e^{\imath\theta_0}} \quad \textrm{for } z\neq e^{\imath\theta_0}, \qquad G(e^{\imath\theta_0})=\diff{g(z)}{z}\bigg|_{z=e^{\imath\theta_0}},
\end{equation}
where $g$ is analytic on the same annulus. For instance, the function \eqref{Y definition} satisfies
\begin{equation}\label{Y explicitly}
Y^{(n)}(z)=\etm{(n-1)}\frac{z^n-\et{n}}{z-\et{}} , \quad Y^{(n)}(\et{})=n=\etm{(n-1)} \diff{{z^n}}{z}\bigg|_{z=\et{}}.
\end{equation}

For $z\neq\et{}$, it will be convenient to consider the function $G$ as a sum of two functions with a singularity on the unit circle, or, as we are about to show, the sum of two functions analytical inside/outside the unit circle. We thus need to introduce another structure. Let $\mathcal{G}$ be defined by the rule
\begin{equation}
\mathcal{G}(z)=\frac{g(z)}{z-\et{}}  \qquad \textrm{for } \rho_-<|z|<\rho_+,\ z\neq \et{},
\end{equation}
with $g$ being a function analytic on the annulus $\rho_-<|z|<\rho_+$. The function $\mathcal{G}$ is thus analytic on the annuli $\rho_-<|z|<1$ and $1<|z|<\rho_+$. Its Laurent series coefficients are different on two different annuli, let us denote them by
\begin{equation}
\mathcal{G}(z)=\sum_{j=-\infty}^{\infty}\mathcal{G}^{\ts}_j z^j \quad \textrm{for } \rho_-<|z|<1, \qquad \mathcal{G}(z)=\sum_{j=-\infty}^{\infty}\mathcal{G}^{\tl}_j z^j \quad \textrm{for } 1<|z|<\rho_+ .
\end{equation}
There are thus two ways of defining $+$ and $-$ components of the functions $\mathcal{G}$. We define
\begin{align}
\big[\mathcal{G}\big]_-^{\ts}(z)=\sum_{j=1}^{\infty}\mathcal{G}^{\ts}_{-j} z^{-j} , \quad \big[\mathcal{G}\big]_+^{\ts}(z)=\sum_{j=0}^{\infty} \mathcal{G}^{\ts}_{j}z^j, \\
\big[\mathcal{G}\big]_-^{\tl}(z)=\sum_{j=1}^{\infty}\mathcal{G}^{\tl}_{-j} z^{-j} , \quad \big[\mathcal{G}\big]_+^{\tl}(z)=\sum_{j=0}^{\infty} \mathcal{G}^{\tl}_{j}z^j.
\end{align}
In this work we are going to make use of the functions $\big[\mathcal{G}\big]_-^{\ts}$ and $\big[\mathcal{G}\big]_+^{\tl}$, both of which are analytic on $\rho_-<|z|<\rho_+$. We are going to use the obvious notation
\begin{equation}
\Big[\frac{g}{z-\et{}}\Big]_-^{\ts}=\big[\mathcal{G}\big]_-^{\ts}, \quad \Big[\frac{g}{z-\et{}}\Big]_+^{\tl}=\big[\mathcal{G}\big]_+^{\tl}.
\end{equation}

Analogously to \eqref{plus minus integral representation} we have the integral representation
\begin{equation}\label{integral representatino ts tl}
\begin{split}
&\Big[\frac{g}{z-\et{}}\Big]_-^{\ts}(z)=\frac{1}{2\pi\imath}\oint_{|w|=\rho_1}\frac{g(w)}{(w-\et{})(z-w)} dw , \quad \textrm{where } \rho_1\in\big(\rho_{-}, \min\{1,|z|\}\big),\\
&\Big[\frac{g}{z-\et{}}\Big]_+^{\tl}(z)=\frac{1}{2\pi\imath}\oint_{|w|=\rho_2}\frac{g(w)}{(w-\et{})(w-z)} dw , \quad \textrm{where } \rho_2\in\big( \max\{1,|z|\},\rho_+ \big).
\end{split}
\end{equation}
Expanding $1/(w-\et{})$ under integrals into series and interchanging the series and the integral, we get the following representation:
\begin{equation}\label{plus minus series}
\begin{split}
&\bigg[\frac{g}{z-\et{}}\bigg]_-^{\ts}(z)=\sum_{j=0}^{\infty} \etm{(j+1)}\big[gz^j\big]_-(z) , \\
& \bigg[\frac{g}{z-\et{}}\bigg]_+^{\tl}(z)=\sum_{j=0}^{\infty} \et{j}\big[gz^{-j-1}\big]_+(z).
\end{split}
\end{equation}

With the introduced definitions, for the function $G$ defined by \eqref{definition G} we have
\begin{equation}\label{components G}
\begin{split}
&\big[G\big]_-=\Big[\frac{g-g(\et{})}{z-\et{}}\Big]_-^{\ts}=\Big[\frac{g}{z-\et{}}\Big]_-^{\ts},\\
&\big[G\big]_+=\Big[\frac{g-g(\et{})}{z-\et{}}\Big]_+^{\tl}=\Big[\frac{g}{z-\et{}}\Big]_+^{\tl},
\end{split}
\end{equation}
where the first equality follows immediately from the integral representations and the second is obtained using
\begin{equation}
\Big[\frac{g(\et{})}{z-\et{}}\Big]_-^{\ts}=0 , \quad \Big[\frac{g(\et{})}{z-\et{}}\Big]_+^{\tl}=0 ,
\end{equation}
which follows from \eqref{plus minus series}. From \eqref{components G} it follows
\begin{equation}\label{decomposition G}
G=\Big[\frac{g}{z-\et{}}\Big]_-^{\ts}+\Big[\frac{g}{z-\et{}}\Big]_+^{\tl}.
\end{equation}
We clearly have also the following linear property. Let $h$ be a function analytic on the same annulus as $g$. Then
\begin{equation}
\big[hG\big]_-=\Big[\frac{hg}{z-\et{}}\Big]_-^{\ts}-\Big[\frac{hg(\et{})}{z-\et{}}\Big]_-^{\ts}, \qquad
\big[hG\big]_+=\Big[\frac{hg}{z-\et{}}\Big]_+^{\tl}-\Big[\frac{hg(\et{})}{z-\et{}}\Big]_+^{\tl}.
\end{equation}

The definitions we have introduced are going to be useful because of the following two elementary lemmas that we state and prove.

\begin{lemma}\label{lemma components t}
	Let $g$ be a function analytic on an annulus $\rho_-<|z|<\rho_+$, that includes the unit circle. Then for $z\neq\et{}$ belonging to the annulus
		\begin{equation}
	\Big[\frac{g}{z-\et{}}\Big]_-^{\ts}(z)=\frac{[g]_-(z)-[g]_-(\et{})}{z-\et{}}, \qquad \Big[\frac{g}{z-\et{}}\Big]_+^{\tl}(z)=\frac{[g]_+(z)-[g]_+(\et{})}{z-\et{}}.
	\end{equation}
\end{lemma}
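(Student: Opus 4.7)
The plan is to prove both identities by inserting the decomposition $g=[g]_-+[g]_+$ into the contour-integral representations \eqref{integral representatino ts tl} and evaluating each piece via Cauchy's theorem and residues, exploiting the analyticity of $[g]_-$ outside a disk and of $[g]_+$ inside a disk.

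For the $(<)$ identity, I would start from
\begin{equation*}
\Big[\frac{g}{z-\et{}}\Big]_-^{\ts}(z)=\frac{1}{2\pi\imath}\oint_{|w|=\rho_1}\frac{g(w)}{(w-\et{})(z-w)}\,dw,\qquad \rho_1\in\big(\rho_-,\min\{1,|z|\}\big),
\end{equation*}
and split $g=[g]_-+[g]_+$ inside the integrand. The $[g]_+$ contribution vanishes by Cauchy's theorem: $[g]_+$ is analytic for $|w|<\rho_+$ and the two poles $w=\et{}$ and $w=z$ both lie outside $|w|=\rho_1$, so the integrand is analytic in the disk enclosed by the contour. For the $[g]_-$ contribution I would use the partial-fraction identity
\begin{equation*}
\frac{1}{(w-\et{})(z-w)}=\frac{1}{z-\et{}}\Big(\frac{1}{w-\et{}}+\frac{1}{z-w}\Big)
\end{equation*}
and deform the contour outward to $|w|=R$ with $R\to\infty$. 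Since $[g]_-$ is analytic for $|w|>\rho_-$ and satisfies $[g]_-(w)=O(1/w)$ as $w\to\infty$, the integrals over $|w|=R$ vanish in the limit. The residue theorem applied in the annulus $\rho_1\le |w|\le R$ (which contains the two poles $\et{}$ and $z$) then yields $-[g]_-(\et{})$ and $[g]_-(z)$ for the two integrals respectively, giving the desired expression $\bigl([g]_-(z)-[g]_-(\et{})\bigr)/(z-\et{})$.

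For the $(>)$ identity the roles are reversed. Using the representation on $|w|=\rho_2\in(\max\{1,|z|\},\rho_+)$, the $[g]_-$ contribution now vanishes by outward deformation: the integrand is analytic on $|w|>\rho_2$ (the only singularities of $[g]_-/\bigl((w-\et{})(w-z)\bigr)$ are at $\et{}$ and $z$, both inside the contour) and decays like $O(1/w^3)$ at infinity. The $[g]_+$ contribution is handled via
\begin{equation*}
\frac{1}{(w-\et{})(w-z)}=\frac{1}{z-\et{}}\Big(\frac{1}{w-z}-\frac{1}{w-\et{}}\Big),
\end{equation*}
and a direct application of Cauchy's integral formula — with both poles now \emph{inside} $|w|=\rho_2$ and $[g]_+$ analytic there — gives $[g]_+(z)-[g]_+(\et{})$ up to the prefactor $1/(z-\et{})$.

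The only substantive bookkeeping is keeping track of which of $\et{}$ and $z$ lies on which side of each contour, and of the decay rates of $[g]_\pm$ at $\infty$ (respectively $0$) that justify discarding the boundary contributions at $R=\infty$; I do not anticipate any genuine obstacle beyond this routine verification.
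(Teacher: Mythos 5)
Your proof is correct, but it takes a genuinely different route from the paper's. The paper never touches the contour integrals directly in this proof: it first kills the $[g]_+$ contribution via the series representation \eqref{plus minus series} (which shows $\big[[g]_+/(z-\et{})\big]_-^{\ts}=0$), then applies the already-established decomposition \eqref{decomposition G} to the difference quotient of $[g]_-$ and observes that its $\tl$, $+$ component vanishes, again by \eqref{plus minus series}; the identity drops out by combining the two reductions. You instead work from the integral representations \eqref{integral representatino ts tl}, split $g=[g]_-+[g]_+$, and evaluate each piece by partial fractions, contour deformation, and residues. Your partial-fraction identities, the pole bookkeeping ($\et{}$ and $z$ both outside $|w|=\rho_1$ and both inside $|w|=\rho_2$), the signs of the residues, and the decay estimates ($[g]_-(w)=O(1/w)$ at infinity, giving $O(1/R)$ boundary terms) all check out. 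The trade-off: the paper's argument is shorter because it recycles machinery it has already built and avoids any discussion of behavior at infinity, while yours is self-contained and makes explicit \emph{why} only $[g]_-$ survives in the $\ts$, $-$ component and only $[g]_+$ in the $\tl$, $+$ component -- namely, which singularities sit on which side of each contour. Either proof is acceptable; just make sure, if you write yours up, to note that $[g]_-$ is analytic on all of $|w|>\rho_-$ (not merely on the annulus, to which the paper restricts it "for definiteness"), since your outward deformation to $|w|=R\to\infty$ relies on that.
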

\begin{proof}
	Let us prove the first equality. From the representation \eqref{plus minus series} it follows
	\begin{equation}
	\left[\frac{[g]_+}{z-\et{}}\right]_-^{\ts}=0,
	\end{equation}
	which implies
	\begin{equation}\label{step lemma G 1}
	\left[\frac{g}{z-\et{}}\right]_-^{\ts}=\left[\frac{[g]_-}{z-\et{}}\right]_-^{\ts}.
	\end{equation}

We now define a function $G^{(-)}$ as 
\begin{equation}
G^{(-)}(z)=\frac{[g]_-(z)-[g]_-(e^{\imath\theta_0})}{z-e^{\imath\theta_0}} \quad \textrm{for }  z\neq e^{\imath\theta_0}, \ \rho_-<|z|<\rho_+,  \qquad G^{(-)} (e^{\imath\theta_0})=\diff{}{z}[g]_-(z)\bigg|_{z=e^{\imath\theta_0}},
\end{equation}
and, using the decomposition \eqref{decomposition G} for this function, we have
\begin{equation}\label{step lemma G}
  G^{(-)} = \left[\frac{[g]_-}{z-\et{}}\right]_-^{\ts}
  +\left[\frac{[g]_-}{z-\et{}}\right]_+^{\tl}
  =\left[\frac{[g]_-}{z-\et{}}\right]_-^{\ts} ,
\end{equation}
where the last equality follows from \eqref{plus minus series}. Combining \eqref{step lemma G 1} and \eqref{step lemma G} proves the first equality of the lemma. The second equality is proven in an analogous way.
\end{proof}

\begin{lemma}\label{lemma plus minus components}
	Let $(g_n)_{n\in\mathbb{N}}$ be a sequence of functions analytic on an annulus $\rho_-<|z|<\rho_+$, that includes the unit circle, and let $\rho_-<\rho_1<1<\rho_2<\rho_+$. Moreover, let $(s_n)_{n\in\mathbb{N}}$ be a sequence of positive numbers.
	\begin{enumerate}[label=(\alph*)]
		\item\label{part a} If $g_n=O(s_n)$ uniformly in $z$ at the circle $|z|=\rho_1'$, for some $\rho_1'\in(\rho_-,\rho_1)$, then
		\begin{equation}
		\big[g_n\big]_-(z)=O(s_n),\quad  \bigg[\frac{g_n}{z-\et{}}\bigg]_-^{\ts}(z)=O(s_n)
		\end{equation}
		uniformly in $z$ on $\rho_1\leq|z|\leq \rho_2$.
		 \item If $g_n=O(s_n)$ uniformly in $z$ at the circle $|z|=\rho_2'$, for some $\rho_2'\in(\rho_2,\rho_+)$, then
		\begin{equation}
		\big[g_n\big]_+(z)=O(s_n),\quad \bigg[\frac{g_n}{z-\et{}}\bigg]_+^{\tl}(z)=O(s_n)
		\end{equation}
		uniformly in $z$ on $\rho_1\leq|z|\leq \rho_2$.
	\end{enumerate}
\end{lemma}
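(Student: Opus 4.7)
The plan is to prove the lemma by directly estimating the integral representations already established, namely equation \eqref{plus minus integral representation} for $[g_n]_\pm$ and equation \eqref{integral representatino ts tl} for the $(<)$ and $(>)$ variants. The core observation is that the integrands depend on $z$ only through simple factors of the form $1/(z-w)$ (and possibly $1/(w-e^{i\theta_0})$), so once we can choose a fixed integration contour along which $g_n = O(s_n)$ uniformly, bounding the integral is immediate.

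For part \ref{part a}, I would fix the contour at radius $\rho_1'$, so that by hypothesis $\max_{|w|=\rho_1'}|g_n(w)| = O(s_n)$. The formula for $[g_n]_-(z)$ from \eqref{plus minus integral representation} requires the contour to lie in $(\rho_-,|z|)$; since $\rho_1' < \rho_1 \le |z|$, this is satisfied for every $z$ in $\rho_1 \le |z| \le \rho_2$. Similarly, the formula for $[g_n/(z-e^{i\theta_0})]_-^{(<)}(z)$ from \eqref{integral representatino ts tl} requires the contour to lie in $(\rho_-, \min\{1,|z|\})$, which again holds because $\rho_1' < \rho_1 < 1 \le \min\{1,|z|\}$ (after possibly shrinking $\rho_1'$). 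Then the triangle inequality gives
\begin{equation*}
|z-w| \ge \rho_1 - \rho_1' > 0, \qquad |w - e^{i\theta_0}| \ge 1 - \rho_1' > 0,
\end{equation*}
uniformly in $z$ on $\rho_1 \le |z| \le \rho_2$ and in $w$ on $|w|=\rho_1'$. The usual $M L$-bound on each contour integral then yields the desired $O(s_n)$ estimate uniformly in $z$.

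Part (b) is proven symmetrically, using the contour $|w|=\rho_2'$ with $\rho_2 < \rho_2' < \rho_+$. The conditions $\rho_2' \in (|z|, \rho_+)$ and $\rho_2' \in (\max\{1,|z|\}, \rho_+)$ from \eqref{plus minus integral representation} and \eqref{integral representatino ts tl} are both satisfied for $|z| \le \rho_2 < \rho_2'$ (since $\rho_2 > 1$), and the denominators $|w-z| \ge \rho_2' - \rho_2$ and $|w - e^{i\theta_0}| \ge \rho_2' - 1$ are bounded below by positive constants independent of $z$ in the stated range.

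The proof is essentially a uniform $M L$-estimate, so there is no serious obstacle; the only bookkeeping point is verifying that the chosen contour radius $\rho_1'$ (respectively $\rho_2'$) simultaneously satisfies the location conditions required by the integral representations for \emph{every} $z$ in the target annulus, and that the separation of the contour from both $z$ and $e^{i\theta_0}$ is bounded below by a strictly positive constant, which is what makes the $O(s_n)$ estimate uniform in $z$.
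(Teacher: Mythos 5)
Your proof is correct and follows essentially the same route as the paper: a uniform $ML$-estimate of the contour integrals in \eqref{plus minus integral representation} and \eqref{integral representatino ts tl}, with the contour fixed at $|w|=\rho_1'$ (resp.\ $|w|=\rho_2'$) and the denominators bounded below by $\rho_1-\rho_1'$ and $1-\rho_1'$ (resp.\ $\rho_2'-\rho_2$ and $\rho_2'-1$). The only cosmetic remark is that no shrinking of $\rho_1'$ is needed, since $\rho_1'<\rho_1<1$ already holds by hypothesis.
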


\begin{proof}
	Let us prove the first part of \ref{part a}. By assumption there is $K>0$ such that
	\begin{equation}\label{step proof lemma}
	|g_n(z)|\leq K s_n \quad \textrm{for all $z$ on the circle $|z|=\rho_1'$}. 
	\end{equation}
	For $\rho_1\leq|z|\leq\rho_2$ we have, from \eqref{plus minus integral representation}, the integral representation
	\begin{equation}
	\big[g\big]_-(z) =\frac{1}{2\pi\imath} \oint_{|w|=\rho_1'}\frac{g(w)}{z-w}dw.
	\end{equation}
	Then from the assumption \eqref{step proof lemma} and using $|z-w|\geq \rho_1-\rho_1'$ it follows
\begin{equation}
\Big|\big[g\big]_-(z)\Big|\leq \frac{K \rho_1'}{\rho_1-\rho_1'} s_n,
\end{equation}
which means that $\big[g\big]_-(z) = O(s_n)$ uniformly in $z$ on $\rho_1\leq|z|\leq \rho_2$. The other parts of the lemma are proven in an analogous way by using the integral representations \eqref{plus minus integral representation} and \eqref{integral representatino ts tl}. 
\end{proof}

\subsection{Solution for the zero winding number case}\label{sec zero winding}

Having introduced the tools for the separation in components of the various functions, we can present the solutions. In the case $\wn=0$, on the basis of a Wiener-Hopf procedure, presented in Appendix \ref{appendix solution}, we construct the following functions, which are the specialization of \eqref{obtained x 1}, \eqref{obtained u 1}, and \eqref{obtained V} and are defined and analytic on the annulus \eqref{rho plus minus}, given for $z\neq\et{}$ by
\begin{align}\label{functions 1}
& X_1^{(n)}(z)=\etm{(n-1)}a_+^{-1}(\et{})a_+^{-1}(z) \frac{z^n c(z)-\et{n}c(\et{})}{z-\et{}} \; , \\
& U_1^{(n)}(z)=-\etm{(n-1)}a_+(z)\frac{a_+^{-1}(z)-a_+^{-1}(\et{})}{z-\et{}} \; ,  \\
&V_1^{(n)}(z)=\et{}a_-(z)\frac{a_-^{-1}(z)-a_-^{-1}(\et{})}{z-\et{}} \; , 
\end{align}
and for $z=\et{}$ by continuity. It's easy to see that these functions satisfy the equation \eqref{equivalent problem}
\begin{equation}
aX_1^{(n)}=Y^{(n)}+U_1^{(n)}z^{n}+V_1^{(n)} \; ,
\end{equation}
where, in this case, according to \eqref{symbol f}, $f=a$, and the function $a_\pm$ in \eqref{functions 1} have been defined in \eqref{apmdef} and we remind that $c=\app \amm$.

Let $\rho$ be defined by \eqref{rho definition}. A straightforward application of Lemma \ref{lemma components t} and Lemma \ref{lemma plus minus components} yields the properties
\begin{equation}\label{properties 1}
\big[X_1^{(n)}\big]_-=O(\rho^n), \quad \big[X_1^{(n)}z^{-n}\big]_+=O(\rho^n), \quad
\big[U_1^{(n)}\big]_-=0 , \quad \big[V_1^{(n)}\big]_+=0 ,
\end{equation}
where $O(\rho^n)$ holds on $\rho\leq|z|\leq\rho^{-1}$, uniformly in $z$. For example, let us show the first property. We have
\begin{equation}
\big[X^{(n)}_1\big]_-=\etm{(n-1)}\apm(\et{})\bigg[\frac{z^na_-^{-1}}{z-\et{}}\bigg]_-^{\ts}-\et{}a_-^{-1}(\et{})\bigg[\frac{a_+^{-1}}{z-\et{}}\bigg]_-^{\ts}.
\end{equation}
Applying Lemma \ref{lemma plus minus components} on the first term, with $\rho_1'\in(\rho_{-},\rho),\ \rho_1=\rho,\ \rho_2=\rho^{-1}, s_n=\rho^n$, we see that it is equal to $O(\rho^n)$ on $\rho\leq|z|\leq \rho^{-1}$, uniformly in $z$. The second term is zero, by Lemma \ref{lemma components t}. The other properties in \eqref{properties 1} are shown in an analogous way.

The properties \eqref{properties 1} should be compared with \eqref{properties}. {They imply that the function $X_1^{(n)}$ with the components $j<0$ and $j\geq n$ removed, i.e. the function $X_2^{(n)}=X_1^{(n)}-[X_1^{(n)}]_--z^n[X_1^{(n)}z^{-n}]_+$, is a solution of the problem
\begin{align}
& aX_2^{(n)}=Y_2^{(n)}+U_2^{(n)}z^{n}+V_2^{(n)}\\
& \big[X_2^{(n)}\big]_-=0,\quad \big[X_2^{(n)}z^{-n}\big]_+=0,\quad \big[U_2^{(n)}\big]_-=0, \quad \big[V_2^{(n)}\big]_+=0 \;,
\end{align}
where {$[\yw_2]_-=0, \ [\yw_2 z^{-n}]_+=0 $}, and
\begin{equation}
\yw_2(z)-\yw(z)=O(\rho^n) \quad \textrm{uniformly in $z$ on the unit circle $|z|=1$.}
\end{equation}
Thus, we have constructed a linear problem whose corresponding functional equation is a $O(\rho^n)$ perturbation of the original one on the unit circle.} {It is intuitively clear that the solution $\xw$ will result into a $O(\rho^n)$ perturbation of $\xw_2$ on the unit circle.} {This passage is made rigorous by applying Lemma \ref{lemma rigor}, stated and proven in Appendix \ref{Appendix rigor}, on $\xw$ and $\xw_2$. It follows
\begin{equation}
\xw_2(z)-\xw(z)=O(\rho^n) \quad \textrm{uniformly in $z$ on the unit circle $|z|=1$}
\end{equation}
and therefore, since $\xw_2(z)-\xw_1(z)=O(\rho^n)$ uniformly in $z$ on the unit circle, we have
\begin{equation}
X^{(n)}(z)=X_1^{(n)}(z)+O(\rho^n) \quad \textrm{uniformly in $z$ on the unit circle $|z|=1$.}
\end{equation}
}

This, together with \eqref{functions 1}, gives
\begin{equation}
X^{(n)}(\et{})=a_+^{-1}(\et{})a_-^{-1}(\et{}) n+\et{}a_+^{-2}(\et{}) \diff{c(z)}{z}\bigg|_{z=\et{}}+O(\rho^n).
\end{equation}
It follows
\begin{equation}\label{step a x zero}
a(\et{})X^{(n)}(\et{})=n-\imath\diff{}{\theta}\log c(e^{\imath\theta})\bigg|_{\theta=\theta_0}+O(\rho^n).
\end{equation}
Theorem \ref{tm zero winding} follows from \eqref{determinant in terms of X} and \eqref{step a x zero}. Let us comment here that the leading term in this solution was already determined in \cite{Dong_2016,DongPRE}, {but without rigor}, with a cavalier use of the component analysis and an improper analytical continuation. Most of all, the approach employed there does not allow to treat the non-zero winding number case.

\subsection{Solution for the non-zero winding number case}\label{sec non zero winding}

Let us assume that $\wn>0$. The result for $\wn<0$ follows from this one by transposing the original Toeplitz {matrix in} \eqref{first} and doing the integral transformation $\theta\to-\theta$. On the basis of the Wiener-Hopf procedure presented in Appendix \ref{appendix solution}, as a specialization of \eqref{obtained x 1}, \eqref{obtained u 1} and \eqref{obtained V} we construct the functions $X^{(n)}_1, U^{(n)}_1, V^{(n)}_1$, analytic on the annulus \eqref{rho plus minus}. For $z\neq\et{}$, $\rho_-<|z|<\rho_+$, they are defined by the rule 
\begin{align}
\begin{split}\label{X 1 wn}
&\xw_1(z)z^\wn=\etm{(n+\wn-1)}\apm(z)\apm(\et{}) \frac{c(z)z^{n+\wn}-c(\et{})\et{(n+\wn)}}{z-\et{}}\\
&\qquad+\etm{(n-1)}\amm(z)z^{n+\wn}\sum_{k=0}^{\wn-1}(\apm)_k\frac{z^{k-\wn}-\et{(k-\wn)}}{z-\et{}}+\apm(z)\sum_{k=1}^{\wn}\ucoeff{k}\big[c \: z^{n+\wn-k}\big]_+(z),
\end{split}\\
& \uw_1 z^{-\wn}=-\etm{(n-1)}\app(z)\frac{[\apm z^{-\wn}]_+(z)-[\apm z^{-\wn}]_+(\et{})}{z-\et{}} +\app(z)\sum_{k=1}^{\wn}\ucoeff{k} z^{-k},\\
& \vw_1(z)=\et{}\amp(z)\frac{\amm(z)-\amm(\et{})}{z-\et{}}-\amp(z)\sum_{k=1}^{\wn}\ucoeff{k}\big[c \: z^{n+\wn-k}\big]_-(z),
\end{align}
and for $z=\et{}$ by continuity. Here $\coeff{1},\coeff{2},...,\coeff{\wn}\in\mathbb{C}$ are for the moment unspecified, and it is simple to check that for any choice of them the functions above satisfy \eqref{equivalent problem} as
\begin{equation}
a z^\wn X_1^{(n)}=Y^{(n)}+U_1^{(n)}z^{n}+V_1^{(n)}  .
\end{equation}

Let $\rho$ be defined by \eqref{rho definition}. As in the previous section, a straightforward application of Lemma \ref{lemma components t} and Lemma \ref{lemma plus minus components} yields
\begin{align}\label{step x nu}
\big[X_1^{(n)}z^\wn\big]_-(z)=O(\rho^n) , \quad \big[U_1^{(n)}\big]_-=0 , \quad \big[V_1^{(n)}\big]_+=0,
\end{align}
where $O(\rho^n)$ holds on $\rho\leq|z|\leq\rho^{-1}$, uniformly in $z$. The coefficients $\coeff{1},\coeff{2},...,\coeff{\wn}$ are chosen to satisfy
\begin{equation}\label{components zero}
\big(X^{(n)}_1 z^\wn\big)_j=O(\rho^n)  \quad \textrm{for } j=0,1,...,\wn-1,
\end{equation}
thus extending \eqref{step x nu} to also $\big[X_1^{(n)}\big]_-=O(\rho^n).$ 

Thus, one computes the components $\big(X^{(n)}_1 z^\wn\big)_j$ using \eqref{Laurent coefficients}, by integrating at the circle $|w|=\rho$, and imposes \eqref{components zero}. As shown in Appendix \ref{appendix solution}, this procedure results in
\begin{equation}\label{coefficients}
\coeff{k}=-a_-^{-1}(\et{})\etm{(\wn-1)}\frac{\tilde{\Delta}_{\wn,n}(k)}{\Delta_{\wn,n}} ,
\end{equation}
where $\Delta_{\wn,n}$ is defined by \eqref{Delta} and $\tilde{\Delta}_{\wn,n}(k)$ by \eqref{Delta tilde l}.

We have shown so far
\begin{equation}\label{properties 1 wn}
\big[\xw_1\big]_-=O(\rho^n), \quad
\big[\uw_1\big]_-=0 , \quad \big[\vw_1\big]_+=0 ,
\end{equation}
which should be compared to \eqref{properties}. It remains to discuss $\big[X_1^{(n)}z^{-n}\big]_+$. Application of Lemmas \ref{lemma components t} and \ref{lemma plus minus components} yields
\begin{equation}
\big[\xw_1z^{-n}\big]_+=O(\rho^n)+\sum_{k=1}^{\wn} \coeff{k}\bigg[a_+^{-1}z^{-(n+\wn)}\big[cz^{n+\wn-k}\big]_+\bigg]_+ .
\end{equation}
We have further
\begin{equation}
\coeff{k}\bigg[\apm z^{-(n+\wn)} \big[cz^{n+\wn-k}\big]_+\bigg]_+=-\coeff{k}\bigg[\apm z^{-(n+\wn)}\big[cz^{n+\wn-k}\big]_-\bigg]_+=-\coeff{k} O(\rho_1^n)
\end{equation}
where in the last equality we have applied Lemma \ref{lemma plus minus components} two successive times for for some $\rho_1\in(\rho_-,\rho)$, and $O(\rho_1^{2n})$ holds on $\rho\leq|z|\leq \rho^{-1}$, uniformly in $z$. Now, assuming that the condition \eqref{condition tm} of Theorem \ref{tm non-zero winding} holds, using \eqref{coefficients} we get
\begin{equation}
\coeff{k}O(\rho_1^{2n})=O\big((\rho_1/\rho)^2\big)
\end{equation}
Defining
\begin{equation}
\sigma=\max\{(\rho_1/\rho)^2, \rho\}
\end{equation}
we have thus
\begin{equation}\label{properties 1 wn last}
\big[\xw_1\big]_-(z)=O(\sigma^n), \quad \big[\xw_1 z^{-n}\big]_+=O(\sigma^n), \quad
\big[\uw_1\big]_-=0 , \quad \big[\vw_1\big]_+=0 ,
\end{equation}
where $O(\sigma^n)$ holds on $\rho\leq |z|\leq \rho^{-1}$, uniformly in $z$.  These properties should be compared with \eqref{properties}. {Because of the same reasoning as for the zero winding number case we expect}
\begin{equation}
X^{(n)}(z)=X_1(z)+O(\sigma^n)
\end{equation}
on the unit circle $|z|=1$. {In this case we cannot provide a rigorous proof for this estimate, but we find it very natural and we have checked numerically that the final result, i.e. Theorem \ref{tm non-zero winding}, holds in several relevant cases.}

It follows
\begin{equation}
\begin{split}\label{step X  et wn}
X^{(n)}(\et{})&\et{\wn}=\etm{(n+\wn-1)}a_+^{-2}(\et{}) \diff{}{z}\Big(c(z)z^{n+\wn}\Big)\bigg|_{z=\et{}}\\
&+\et{(\wn+1)}a_-^{-1}(\et{})\sum_{k=0}^{\wn-1}\big(a_+^{-1}\big)_k\diff{}{z}z^{k-\wn}\bigg|_{z=\et{}}+a_+^{-1}(\et{})\sum_{k=1}^{\wn}\coeff{k} \big[c z^{n+\wn-k}\big]_+(\et{})+O(\sigma^n)
\end{split}
\end{equation}
from which we get
\begin{equation}
\begin{split}
\xw(\et{})\et{\wn}  &= 
\apm(\et{})\amm(\et{}) (n+\wn)
+\et{}a_+^{-2}(\et{}) \diff{c(z)}{z}\bigg|_{z=\et{}} \\
&+a_-^{-1}(\et{})\sum_{k=0}^{\wn-1} (k-\wn)\big(a_+^{-1}\big)_k \et{k}
+a_+^{-1}(\et{})\sum_{k=1}^{\wn}\coeff{k} \big[c z^{n+\wn-k}\big]_+(\et{})+O(\sigma^n).
\end{split}
\end{equation}
Lemma \ref{lemma plus minus components} gives a simplification
\begin{equation}\label{simplification}
\big[c z^{n+\wn-k}\big]_+(\et{})=c(\et{})\et{(n+\wn-k)}+O(\rho^n)
\end{equation}
from which it follows
\begin{equation}
\xw(\et{})\et{\wn} =\amm(\et{})\sum_{k=1}^{\wn} \coeff{k}\big(\et{(n+\wn-k)}+O(\rho^n)\big)+\apm(\et{})\amm(\et{})n+O(1).
\label{before tm non zero}
\end{equation}
Theorem \ref{tm non-zero winding} follows directly from \eqref{determinant in terms of X} and \eqref{before tm non zero}, where the result for the case $\wn <0$ descends from the one for $\wn>0$ by transposing the original Toeplitz {matrix in} \eqref{first} and making the integral transformation $\theta \to -\theta$.

\section{Application of the results: Frustrated quantum XY chain in zero field}\label{section application}

As an example of a concrete application of our results we compute the lowest energy band spin-correlation functions and the ground state magnetization for the frustrated quantum spin chain defined by the Hamiltonian
\begin{equation}\label{Hamiltonian}
	H=\sum_{j=1}^{N} \Big(\sigma_j^x\sigma_{j+1}^x -\lambda \sigma_j^y\sigma_{j+1}^y\Big),
\end{equation} 
where $\lambda\in(0,1)$ is the anisotropy parameter, $N=2M+1$ is the number of lattice sites, which is imposed to be odd, $\sigma^\alpha$ for $\alpha=x,y,z$, are Pauli matrices, and periodic boundary conditions are imposed ($\sigma_j^\alpha=\sigma_{j+N}^\alpha$). This kind of models, known as XY chains, have been introduced in \cite{LIEB1961407} and its ground state spin-correlation functions have been computed in \cite{McCoy1968,LIEB1961407,McCoy2}. However, in \eqref{Hamiltonian} we take the dominant interaction (along the $x$ component of the spins) to favor antiferromagnetic order. This choice, coupled with the periodic boundary conditions on a ring with an odd number of sites $N$, introduces \textit{boundary frustration} different properties, which have been studied in \cite{Dong_2016,DongPRE,Giampaolo:2018imc,mari2019frustration,mari2020frustration}.
The frustrated model is interesting also because it demonstrates that different boundary conditions and different parities in $N$ can result in different ground state order even in the thermodynamic limit ($N\to\infty$).

We briefly review some properties of the model, found in \cite{mari2019frustration,mari2020frustration}, to introduce the notation. Denoting by $\Pi^\alpha=\prod_{j=1}^{N}\sigma_j^\alpha$, for $\alpha=x,y,z$, the parity operators, we have that all three commute with the Hamiltonian \eqref{Hamiltonian} $\left[ \Pi^\alpha, H\right]=0$, but, with odd $N$, satisfy a non-commuting algebra $\left[ \Pi^\alpha, \Pi^\beta \right]=\imath\,\varepsilon^{\alpha,\beta,\gamma} 2 (-1)^\frac{N-1}{2} \Pi^\gamma$, which is essentially SU(2). More interestingly for us, the parities anticommute $\left\{ \Pi^\alpha, \Pi^\beta \right\} = \delta_{\alpha,\beta}$. Because of these symmetries, every eigenstate of the spin chain \eqref{Hamiltonian} is at least two-fold degenerate. In fact, if $\ket{\Omega}$ is an eigenstate of \eqref{Hamiltonian} with, for instance, positive $z$-parity $\Pi^z \ket{\Omega}= \ket{\Omega}$, then $\Pi^x \ket{\Omega}$ has the same energy eigenvalue with respect to $H$ and opposite $z$-parity.

In each sector of given $z$-parity, the XY chain can be mapped exactly, although non-locally, to a system of free fermions \cite{Franchini:2016cxs}. This mapping allows to represent every state in a Fock space: one defines a vacuum $\ket{0^\pm}$
which is annihilated by fermionic operators $a_q$, with $q \in \Gamma^\pm$ belonging to a different set (of integers or half-integers) depending on the parity ($\Gamma^\pm=\left\{\frac{\pi}{N} \left(2j+\frac{1+\Pi^z}{2} \right): j=0,1,...,N-1\right\}$), $a_q \ket{0^\pm}=0$  for all $q\in\Gamma^\pm$, and applies the Bogoliubov creation operators $a_q^\dagger$ to create all other states. Only states with a number of excitations compatible with the parity are admissible in the Hilbert space of \eqref{Hamiltonian}: assigning zero excitations to the vacua $\ket{0^\pm}$, each $a_q^\dagger$ adds one. Even excitation states belong to the positive $z$-parity sector, while odd excitation ones have negative $z$-parity.

It is convenient to work just in a single $z$-parity (we will take $\Pi^z=-1$) and generate all (the degenerate) states with opposite parity through the application of $\Pi^x$. Due to the frustrated boundary conditions, the system is gapless with the energy gap between the states closing as $1/N^2$. This means that the lowest energy state is part of a band, spanned by the (single excitation) states $\ket{q}=a_q^\dagger\ket{0^-}$, which have negative $z$-parity $\Pi^z=-1$, and the states $\Pi^x\ket{q}=\imath(-1)^{(N-1)/2}\Pi^y\ket{q}$, which have the opposite $z$-parity $\Pi^z=1$. The energy of the states $\ket{q}$ and $\Pi^x\ket{q}$ is equal and the index $q$ is the momentum of the excitation, that can be related to lattice translations \cite{mari2020frustration}. The ground state, in particular, has momentum $q=0$, and is two-fold degenerate \cite{mari2019frustration}.
A generic ground state is, therefore, a superposition
\begin{equation}\label{ground state}
	\ket{g}=\cos\theta\ket{g^-}+\sin\theta\ e^{\imath\psi}\ket{g^+},
\end{equation}
where $\ket{g^-}=\ket{q=0}$, $\ket{g^+}=\Pi^x\ket{g^-}$, $\theta\in[0,2\pi)$ and $\psi\in[0,2\pi)$.

We are interested in the spin correlation functions $\bra{q}\sigma_1^x\sigma_{1+n}^x\ket{q}$ and $\bra{q}\sigma_1^y\sigma_{1+n}^y\ket{q}$, and in the ground state magnetizations $\bra{g}\sigma_1^x\ket{g}$ and $\bra{g}\sigma_1^y\ket{g}$. Note that, as a consequence of the symmetries and of the exact degeneracies, it is possible to have a spontaneous finite magnetization even for finite $N$.

\subsubsection{Spin-correlation functions}
Using the Majorana fermions representation of the spin operators and performing the Wick contractions as in \cite{DongPRE}, the spin correlation functions can be expressed in terms of Toeplitz determinants
\begin{equation}\label{two point}
	\bra{q} \sigma_1^\alpha\sigma_{1+n}^\alpha\ket{q}=(-1)^n\Big[\Big(\tilde{D}_n(\tilde{f})+\textrm{c.c.}\Big)-D_n(f)\Big],
\end{equation}
where $\textrm{c.c.}$ stands for the complex conjugate and
\begin{align}
	&\tilde{f}_j^{(n)}=f_j-\frac{1}{N}f(e^{\imath q})e^{-\imath q j},\\
	&f_j=\frac{1}{N}\sum_{\theta\in\Gamma^-} f(e^{\imath \theta})e^{-\imath j \theta}\stackrel{N\to\infty}{\simeq}\frac{1}{2\pi}\int_{0}^{2\pi}f(e^{\imath\theta})e^{-\imath j \theta} d\theta,\\
	&f(z)=a(z)z^\wn, \quad a(z)=\sqrt{\frac{1-\lambda z^{-2}}{1-\lambda z^2}}.\\
\end{align}
The winding number is $\wn=0$ for $\alpha=x$ and $\wn=2$ for $\alpha=y$. Comparing with \eqref{elements tilde} we see that $z_n=-1/N$, thus a constant with respect to $n$, although, from physical considerations, we must have $n<N/2$. 

We see that $a$ is analytic on $\lambda^{1/2}<|z|<\lambda^{-1/2}$ and by inspection we find
\begin{equation}
	\app(z)=\amm(z^{-1})=(1-\lambda z^2)^{-1/2}, \quad c(z)=b^{-1}(z)=\app(z)\amm(z)=\big[(1-\lambda z^2)(1-\lambda z^{-2})\big]^{-1/2}.
\end{equation}
The determinant $D_n(f)$ has already been computed in \cite{McCoy1968} because it determines the ground state spin-correlation functions in absence of frustration. For $\wn=0$ it is given by
\begin{align}
	&D_n(f)\\
	&=(1-\lambda^2)^{1/2}\Big[1+4\Big(\frac{\lambda}{1-\lambda^2}\Big)^2\frac{\lambda^n}{\pi n^2}\big(1+O(n^{-1})\big)\Big] , \quad \textrm{for $n=2m$ as $m \to\infty$, }\\
	&=(1-\lambda^2)^{1/2}\Big[1+2\frac{1+\lambda^2}{\lambda}\Big(\frac{\lambda}{1-\lambda^2}\Big)^2\frac{\lambda^n}{\pi n^2}\big(1+O(n^{-1})\big)\Big] , \quad \textrm{for $n=2m+1$ as $m \to\infty$. }
\end{align}
Applying Theorem \ref{tm zero winding}, with the term
\begin{equation}
	\diff{}{\theta}\log b(e^{\imath\theta})\bigg|_{\theta=q}=\frac{2\lambda  \sin q }{1+\lambda^2-2\lambda\cos2q}\in\mathbb{R}
\end{equation}
not contributing in \eqref{two point}, we find
\begin{align}\label{xx}
	&\bra{q} \sigma_1^x\sigma_{1+n}^x\ket{q}\\
	&=(1-\lambda^2)^{1/2}\Big[1+4\Big(\frac{\lambda}{1-\lambda^2}\Big)^2\frac{\lambda^n}{\pi n^2}\Big(1+O(n^{-1})\Big)\Big]\Big[1-\frac{2n}{N}\Big(1+O(\lambda^{\frac{n}{2}(1+\varepsilon)})\Big)\Big] , \quad \textrm{for $n=2m$ as $m \to\infty$, }\\
	&=-(1-\lambda^2)^{1/2}\Big[1+2\frac{1+\lambda^2}{\lambda}\Big(\frac{\lambda}{1-\lambda^2}\Big)^2\frac{\lambda^n}{\pi n^2}\big(1+O(n^{-1})\big)\Big] \Big[1-\frac{2n}{N}\Big(1+O(\lambda^{\frac{n}{2}(1+\varepsilon)})\Big)\Big], \quad \textrm{for $n=2m+1$ as $m \to\infty$, }
\end{align}
where $\varepsilon>0$ is arbitrarily small.

To compute $\bra{q} \sigma_1^y\sigma_{1+n}^y\ket{q}$ using Theorem \ref{tm non-zero winding} we need to find
\begin{equation}
	c_{-n}=\frac{1}{2\pi\imath} \oint_{|w|=1}\frac{w^{n-1}}{[(1-\lambda z^{2})(1-\lambda z^{-2})]^{1/2}}dw.
\end{equation}
Integrals of this type have been computed in \cite{McCoy2,McCoy1968,Wu1966,TheTwo-DimensionalIsingModel}, for the purpose of computing the ground state spin-correlation functions, using the properties of the hypergeometric functions. This one is given by
\begin{align}
	& c_{-n}=\frac{2^{1/2}}{(1-\lambda^2)^{1/2}} \frac{\lambda^{\frac{n}{2}}}{\sqrt{\pi n}}\Big(1+O(n^{-1})\Big) \quad \textrm{for $n=2m$ as $m\to\infty$.}\\
	&  c_{-n}=0 \quad \textrm{for $n=2m+1$}.
\end{align}
Applying Theorem \ref{tm non-zero winding}, where the condition \eqref{condition tm} of the theorem is satisfied for $\rho$ close to $\sqrt{\lambda}$, and using the result \eqref{FH nonzero} for $D_n(f)$, we find
\begin{align}\label{yy}
	&\bra{q} \sigma_1^y\sigma_{1+n}^y\ket{q}\\
	&=\frac{2}{1-\lambda}\frac{\lambda^n}{\pi n}\Big(1+O(n^{-1})\Big)+2^{5/2}\frac{\cos n q}{(1+\lambda^2-2\lambda \cos 2 q)^{1/2}}\frac{\lambda^{\frac{n}{2}}}{N\sqrt{\pi n}}\Big(1+O(n^{-1})\Big) \quad \textrm{for $n=2m$ as $m\to\infty$.}\\
	&=\frac{2}{1-\lambda}\frac{\lambda^n}{\pi n}\Big(1+O(n^{-1})\Big)+2^{3/2}\frac{\lambda^{-\frac{1}{2}}\cos [(n+1)q]+\lambda^{\frac{1}{2}}\cos[(n-1)q]}{(1+\lambda^2-2\lambda \cos 2 q)^{1/2}}\frac{\lambda^{\frac{n}{2}}}{N\sqrt{\pi n}}\Big(1+O(n^{-1})\Big) \quad \textrm{for $n=2m+1$ as $m\to\infty$.}
\end{align}
In the ground state ($q=0$) terms proportional to $1/N$ in \eqref{xx} and \eqref{yy} are due to the delta-function singularity in the symbol and make the difference between the frustrated model and the model without frustration (that is, periodic boundary conditions with $N=2M$ or free boundary conditions). In this case, the difference is relevant only at distances $n$ comparable to the system size $N$. Without these terms, the $x$ correlation function would converge exponentially fast to a saturation value as the distance between sites is increased, while the $y$ corrrelation decays to zero, reflecting a spontaneous magnetization developing only in the $x$ direction. The dependence \eqref{xx} implies, instead, that the correlations between the most distant spins, separated by $n=(N-1)/2$, decay as $1/N$ as we increase the (odd) system size $N$. This kind of behavior in frustrated quantum chains was first found in \cite{CampostriniRings,Campostrini_2015} and later further discussed, and checked numerically, in \cite{Dong_2016,DongPRE,Giampaolo:2018imc,mari2019frustration}.

\subsubsection{The ground state magnetization}
As discussed in \cite{mari2019frustration,mari2020frustration}, the magnetization in the ground state is mesoscopic (that is, finite in finite systems) and ferromagnetic, i.e. $\bra{g}\sigma_1^x\ket{g}=\bra{g}\sigma_j^x\ket{g}$ for $j=2,3,...,N$, in a generic ground state $\ket{g}$ defined by \eqref{ground state}. It is given by
\begin{align}
	&\bra{g}\sigma_1^x\ket{g}=\cos\psi\sin 2\theta \bra{g^-}\sigma_1^x\Pi^x\ket{g^-}\\
	&\bra{g}\sigma_1^y\ket{g}=(-1)^{\frac{N-1}{2}}\sin\psi\sin 2 \theta \bra{g^-}\sigma_1^y\Pi^y\ket{g^-}
\end{align}
The absolute values of the quantities $\bra{g}\sigma_1^\alpha\Pi^\alpha\ket{g}$, for $\alpha=x,y$, are the maximal values of the magnetization on the ground state manifold, and it has been shown in \cite{mari2019frustration} that these quantities can be expressed as Toeplitz determinants, as
\begin{equation}
	\bra{g^-}\sigma_1^\alpha\Pi^\alpha\ket{g^-}\stackrel{N\to\infty}{\simeq}(-1)^n \tilde{D}_n(\tilde{f}),
\end{equation}
where
\begin{align}
	&n=\frac{N-1}{2}, \quad\tilde{f}_j^{(n)}=f_j-\frac{2}{N},\\
	&f_j=\frac{1}{2\pi}\int_{0}^{2\pi}f(e^{\imath\theta})e^{-\imath j \theta} d\theta,\\
	&f(z)=a(z)z^\wn, \quad a(z)=\sqrt{\frac{1-\lambda z^{-1}}{1-\lambda z}}.\\
\end{align}
Here the winding number is $\wn=0$ for $\alpha=x$ and $\wn=1$ for $\alpha=y$. Theorems \ref{tm zero winding} and \ref{tm non-zero winding} can be applied with $z_n=1/N=1/(2n+1)$.

We proceed in a similar way to the previous section. The function $a(z)$ is analytic on $\lambda<|z|<\lambda^{-1}$, and by inspection we find
\begin{equation}
	\app(z)=\amm(z^{-1})=(1-\lambda z)^{-1/2}, \quad c(z)=b^{-1}(z)=\app(z)\amm(z)=\big[(1-\lambda z)(1-\lambda z^{-1})\big]^{-1/2}.
\end{equation}
The coefficients $c_{-n}$ are given by
\begin{equation}
	c_{-n}=\frac{1}{2\pi\imath} \oint_{|w|=1}\frac{w^{n-1}}{[(1-\lambda z)(1-\lambda z^{-1})]^{1/2}}dw= \frac{\lambda^n}{\sqrt{\pi n}}\Big(1+O(n^{-1})\Big).
\end{equation}
Applying Theorems \ref{tm zero winding} and \ref{tm non-zero winding} we find
\begin{align}
	&\bra{g^-}\sigma_1^x\Pi^x\ket{g^-}=(-1)^{\frac{N-1}{2}}\frac{1}{N}(1-\lambda^2)^{\frac{1}{4}}\Big(1+O(\lambda^{\frac{N}{2}(1+\varepsilon)})\Big),
	\label{xmag}\\
	&\bra{g^-}\sigma_1^y\Pi^y\ket{g^-}=\frac{2}{N}\frac{(1+\lambda)^{\frac{1}{4}}}{(1-\lambda)^{\frac{3}{4}}}\Big(1+O(\lambda^{\frac{N}{2}(1+\varepsilon)})\Big),
	\label{ymag}
\end{align}
where $N=2M+1$, as $M\to\infty$. Here $\varepsilon>0$ is arbitrarily small.

We remark that without frustration (that is, without the delta-function in the symbol) the Toeplitz determinant in \eqref{xmag} would approach a constant exponentially fast, while the one in \eqref{ymag} would similarly decay to zero, while with the delta-function they both show an algebraic decay in the matrix rank.
We conclude that both magnetizations go to zero as $N=2M+1,M\to\infty$, which is in a striking difference from the behavior of the model in the limit $N=2M, M\to\infty,$ and from the behavior of the model with free boundary conditions.

\section*{Acknowledgments}
We thank Alexander Its for reading the manuscript and providing his comments on it, Salvatore Marco Giampaolo for useful discussions and preliminary numerical analysis, and Alexander Abanov for his help. We acknowledge support from the European Regional Development Fund -- the Competitiveness and Cohesion Operational Programme (KK.01.1.1.06 -- RBI TWIN SIN) and from the Croatian Science Foundation (HrZZ) Projects No. IP--2016--6--3347.
FF also acknowledge support from the QuantiXLie Center of Excellence, a project co--financed by the Croatian Government and European Union through the European Regional Development Fund -- the Competitiveness and Cohesion (Grant KK.01.1.1.01.0004) and from the Croatian Science Foundation (HrZZ) Projects No. IP--2019--4--3321.

\appendix

\begin{appendices}

\section{Existence and uniqueness of the solution}\label{appendix existence}
	For all $n\geq n_0$ we have $D_n(f)\neq 0$ so there exists a unique solution {$x_j^{(n)}$, for $j=0,1,...,n-1$, of the linear problem
\begin{equation}\label{linear problem general}
\sum_{k=0}^{n-1} f_{j-k}x_k^{(n)}=y_j^{(n)}, \quad \textrm{for } j=0,1,...,n-1,
\end{equation}
for arbitrary complex numbers $y_j^{(n)}$, $j=0,1,\ldots n- 1$.
} We define the coefficients
	\begin{equation}\label{u v definition}
	u_j^{(n)}=\begin{cases}
	\sum_{k=0}^{n-1} f_{j-k+n} x_k^{(n)} , \quad &\textrm{for } j=0,1,2,...\\
	0, \quad &\textrm{for } j=-1,-2,...
	\end{cases}
	,\quad
	v_j^{(n)}=\begin{cases}
	0 , \quad &\textrm{for } j=0,1,2,...\\
	\sum_{k=0}^{n-1} f_{j-k} x_k^{(n)} , \quad &\textrm{for } j=-1,-2,...
	\end{cases}
	\end{equation}
	and the functions
	\begin{equation}
	U^{(n)}=\sum_{j=0}^{\infty}u_j^{(n)}z^j , \qquad V^{(n)}=\sum_{j=1}^{\infty}v_{-j}^{(n)}z^{-j}.
	\end{equation}
	The functions $U^{(n)}$ and $V^{(n)}$ are well defined, and therefore analytic, on the same annulus as $f(z)$, the one defined by \eqref{rho plus minus}. To see this pick some $z$ from the annulus. We have
	\begin{equation}
	\begin{split}
	\sum_{j=1}^{\infty}&|v_{-j}^{(n)}||z|^{-j}\leq\sum_{j=1}^{\infty} \sum_{k=0}^{n-1}|f_{-j-k}||x_k^{(n)}||z|^{-j}=\sum_{k=0}^{n-1}|x_k^{(n)}||z|^{k}\sum_{j=1}^{\infty}|f_{-j-k}||z|^{-j-k}\\ &\leq\Big(\sum_{j=-\infty}^{\infty}|f_j||z|^{j}\Big) \Big(  \sum_{k=0}^{n-1}|x_k^{(n)}||z|^{k} \Big)<\infty \;,
	\end{split}
	\end{equation}
	where the last inequality holds because Laurent series is absolutely convergent in the interior of its annulus. In an analogous way it is shown that $U^{(n)}$ is well defined.
	
	It follows from definition \eqref{u v definition} that the equation
	\begin{equation}
	\sum_{k=0}^{n-1}f_{j-k}x_k^{(n)}=y_j^{(n)}+u_{j-n}^{(n)}+v_j^{(n)},
	\end{equation}
	with $y_j^{(n)}$ defined {to be zero for $j<0$ and $j\geq n$}, holds for all $j\in\mathbb{Z}$. Multiplying the equation by $z^j$, with $z$ belonging to the annulus \eqref{rho plus minus}, and summing from $j=-\infty$ to $j=\infty$ it follows
	\begin{equation}\label{functional equation app}
	f(z)\xw(z)=\yw(z)+\uw(z) z^n+ \vw(z) \; ,
	\end{equation}
where
{
\begin{equation}\label{X Y definition app}
X^{(n)}(z)=\sum_{j=0}^{n-1}x_j^{(n)}z^j, \quad Y^{(n)}(z)=\sum_{j=0}^{n-1}y_j^{(n)}z^j .
\end{equation}
} Thus we have shown that{, for arbitrary polynomials $\yw$ of degree not greater than $n-1$, i.e. analytic functions with the properties
\begin{equation}
\big[\yw\big]_-=0,\quad \big[\yw z^{-n}\big]_+=0,
\end{equation}
that the functions $X^{(n)}, U^{(n)}, V^{(n)}$ are the solution of the functional equation \begin{equation}\label{equivalent problem app}
fX^{(n)}=Y^{(n)}+U^{(n)}z^{n}+V^{(n)}
\end{equation}
on the annulus and by construction they have the properties
\begin{equation}\label{properties app}
\big[X^{(n)}\big]_-=0,\quad \big[X^{(n)}z^{-n}\big]_+=0,\quad \big[U^{(n)}\big]_-=0, \quad \big[V^{(n)}\big]_+=0 \; .
\end{equation} The uniqueness of the solution of \eqref{linear problem general} implies the uniqueness of the solution of \eqref{equivalent problem app} under constraint \eqref{properties app}. }

\section{Wiener-Hopf procedure}\label{appendix solution}

\subsection{Wiener-Hopf equations}
We assume $\wn\geq0$. {The determinant \eqref{first} for $\wn <0$ can be obtained simply by transposing the Toeplitz matrix for the opposite sign of $\wn$ and making the integral transformation $\theta \to -\theta$.} From \eqref{equivalent problem app} it follows, separating the components,
\begin{equation}
a_+ z^\wn \xw-\big[\amm\yw\big]_+-\big[\amm\uw z^{n}\big]_+=\amm\vw+\big[\amm\yw\big]_-+\big[\amm\uw z^{n}\big]_- ,
\end{equation}
where $a_\pm$ have been defined in \eqref{apmdef}.
We now use the standard Wiener-Hopf argument \cite{TheTwo-DimensionalIsingModel}. Namely, the properties {\eqref{properties app}} imply that through it's Laurent series the left-hand side defines a function analytic on $|z|<\rho_+$,  while the right-hand side defines a function analytic on $|z|> \rho_-$, that goes to zero for $|z|\to\infty$. The two sides together define a function analytic on the whole plane and zero at infinity, thus, by Liouville's theorem, zero on the whole plane. It follows
\begin{align}
&\xw z^\wn=\apm \big(\big[\amm\yw\big]_++\big[\amm\uw z^{n}\big]_+\big) \label{w x 1},\\
&\vw=-\amp \big(\big[\amm\yw\big]_-+\big[\amm\uw z^{n}\big]_- \big) \label{w v}.
\end{align}

Similarly, denoting
\begin{equation}
\ucoeff{k}=\big(\apm\uw z^{-\wn}\big)_{-k} \; ,
\end{equation}
and multiplying {\eqref{equivalent problem app}} by $\apm z^{-(n+\wn)}$, we can make the separation
\begin{equation}
\begin{split}
\Big(&\apm\uw z^{-\wn}-\sum_{k=1}^{\wn}\ucoeff{k}z^{-k}\Big)+ \big[\apm\yw z^{-(n+\wn)}\big]_++\big[\apm\vw z^{-(n+\wn)}\big]_+\\
&=\amp\xw z^{-n}-\big[\apm\yw z^{-(n+\wn)}\big]_--\big[\apm\vw z^{-(n+\wn)}\big]_- -\sum_{k=1}^{\wn}\ucoeff{k}z^{-k} \; .
\end{split}
\end{equation}
It follows
\begin{align}
&\uw z^{-\wn}=-\app\big(\big[\apm\yw z ^{-(n+\wn)}\big]_++\big[\apm\vw z^{-(n+\wn)}\big]_+\big)+\app\sum_{k=1}^{\wn}\ucoeff{k} z^{-k} \label{w u},\\
&{X^{(n)}}z^{-n}=\amm\big(\big[\apm\yw z ^{-(n+\wn)}\big]_-+\big[\apm\vw z^{-(n+\wn)}\big]_-\big)+\amm\sum_{k=1}^{\wn}\ucoeff{k} z^{-k} \label{w x 2}.
\end{align}
This result is also valid for $\wn=0$ adopting the convention $\sum_{k=1}^{0}=0$.

The solution of the set of equations \eqref{w x 1}, \eqref{w v}, \eqref{w u} and \eqref{w x 2}, together with the requirement
\begin{equation}\label{requirement}
\big(\xw z^\wn\big)_j=0 \quad \textrm{for } {j=0,1,...,\wn-1,}
\end{equation}
that fixes the coefficients $\ucoeff{1},\ucoeff{2},...,\ucoeff{\wn}$, is the solution of {\eqref{equivalent problem app}} with the desired properties {\eqref{properties app}}.

\subsection{The solution}
{In this section we solve} {asymptotically} {the functional equation \eqref{equivalent problem app} with $\yw$ defined by \eqref{y definition} and \eqref{Y definition}. }For the set of equations \eqref{w x 1}, \eqref{w v}, \eqref{w u} and \eqref{w x 2} a solution in the closed form might not exist so we follow the standard approach \cite{Wu1966,TheTwo-DimensionalIsingModel} and we look for the solution by making an assumption on the function $U^{(n)}$ and then checking whether the final solution we obtain is consistent with this assumption.

We assume that
\begin{equation}\label{assumption}
U^{(n)}z^{-\wn}-\app\sum_{k=1}^{\wn}\ucoeff{k}z^{-k}=O(1) \quad \textrm{ uniformly {in} $z$, on $\rho\leq|z|\leq \rho^{-1}$, for all $\rho$ defined by \eqref{rho definition}}.
\end{equation}
The second term in \eqref{w v} is equal to
\begin{equation}\label{step v a}
\big[\amm\uw z^{n}\big]_ -=
\Big[\amm z^{n+\wn}\Big(U^{(n)}z^{-\wn}
-\app\sum_{k=1}^{\wn}\ucoeff{k}z^{-k}\Big)\Big]_-
+ \left[ \app\amm \sum_{k=1}^{\wn}\ucoeff{k}z^{n+\wn-k} \right]_- \, .
\end{equation}
Applying Lemma \ref{lemma plus minus components} on the first term in \eqref{step v a} gives
\begin{equation}\label{step u a}
\big[\amm\uw z^{n}\big]_-=
\left[ \app\amm \sum_{k=1}^{\wn}\ucoeff{k}z^{n+\wn-k} \right]_- +O(\rho^n) \; ,
\end{equation}
where $O(\rho^n)$ holds on $\rho\leq|z|\leq \rho^{-1}$, uniformly in $z$, for all $\rho$ satisfying \eqref{rho definition}. From now on it is always the case and we don't write every time that $O$ holds uniformly in $z$ on $\rho\leq|z|\leq \rho^{-1}$, for all $\rho$ satisfying \eqref{rho definition}. We can thus write \eqref{w v} as
\begin{equation}\label{v before u}
\vw(z)=-\amp(z)\big[\amm\yw\big]_-(z)-\amp(z)\sum_{k=1}^{\wn}\ucoeff{k} \big[\app\amm z^{n+\wn-k}\big]_-(z)+O(\rho^n).
\end{equation}

We use \eqref{Y explicitly} to rewrite the first term on the RHS of \eqref{v before u} as
\begin{equation}\label{step y a}
\big[\amm\yw\big]_-=\etm{(n-1)}\bigg[\frac{\amm z^n}{z-\et{}}\bigg]_-^{\ts}-\et{}\bigg[\frac{\amm}{z-\et{}}\bigg]_-^{\ts} \; .
\end{equation}
The first term here is $O(\rho^n)$ by Lemma \ref{lemma plus minus components}. Applying Lemma \ref{lemma components t} to the second term gives
\begin{equation}\label{step y a1}
\big[\amm\yw\big]_-(z)=-\et{}\frac{\amm(z)-\amm(\et{})}{z-\et{}}+O(\rho^n) \quad \textrm{ for $z\neq\et{}$, $\rho\leq|z|\leq \rho^{-1}$}.
\end{equation}
The value at $z=\et{}$ is obtained by continuity and from now on we omit writing $z\neq\et{}$, $\rho\leq|z|\leq \rho^{-1}$ every time. It follows
\begin{equation}\label{obtained V}
\vw(z)=\et{}\amp(z)\frac{\amm(z)-\amm(\et{})}{z-\et{}}-\amp(z)\sum_{k=1}^{\wn}\ucoeff{k}\big[\amm\app z^{n+\wn-k}\big]_-(z)+O(\rho^n) \; .
\end{equation}

This expression can be used in \eqref{w u} to find $\uw$. Before we do so, we use \eqref{Y explicitly} again to get for the first term on the RHS of \eqref{w u}:
\begin{equation}
\big[\apm\yw z^{-(n+\wn)}\big]_+=\etm{(n-1)}\bigg[\frac{\apm z^{-\wn}}{z-\et{}}\bigg]_+^{\tl}-\et{}\bigg[\frac{\apm z^{-(n+\wn)}}{z-\et{}}\bigg]_+^{\tl} \; .
\end{equation}
The second term is $O(\rho^n)$ by Lemma \ref{lemma plus minus components}. Applying Lemma \ref{lemma components t} to the first term we get
\begin{equation}
\big[\apm\yw z^{-(n+\wn)}\big]_+(z)=\etm{(n-1)}\frac{[\apm z^{-\wn}]_+(z)-[\apm z^{-\wn}]_+(\et{})}{z-\et{}}+O(\rho^n) \; .
\end{equation}
We can now substitute \eqref{obtained V} in \eqref{w u} and apply Lemma \ref{lemma plus minus components} to the second term on the RHS of \eqref{w u} to get
\begin{equation}
\big[\apm\vw z^{-(n+\wn)}\big]_+(z)=-\sum_{k=1}^{\wn} \ucoeff{k} \big[\amp\apm z^{-(n+\wn)}\big[\amm\app z^{n+\wn-k}\big]_-\big]_+(z) +O(\rho^n) \; .
\end{equation}
Collecting everything it follows from \eqref{w u}
\begin{equation}\label{obtained u}
\begin{split}
\uw z^{-\wn}=&-\etm{(n-1)}\app(z)\frac{[\apm z^{-\wn}]_+(z)-[\apm z^{-\wn}]_+(\et{})}{z-\et{}}+\\
& +\app(z)\sum_{k=1}^{\wn}\ucoeff{k} \Big(z^{-k} +\big[\amp\apm z^{-(n+\wn)}\big[\amm\app z^{n+\wn-k}\big]_-\big]_+(z)\Big)+O(\rho^n).
\end{split}
\end{equation}

The coefficients $\ucoeff{1},\ucoeff{2},...,\ucoeff{\wn}$ remain to be determined. However, if we assume that, for sufficiently small $\rho$, they satisfy
\begin{equation}\label{if satisfy}
\ucoeff{k}O(\rho^{2n})=O(1) , \quad \textrm{for }k=1,2,...,\wn,
\end{equation}
then, taking a $\rho_1$ such that $\rho_-<\rho_1<\rho<1<\rho^{-1}<\rho_1^{-1}<\rho_+$, the last term in \eqref{obtained u} is, by Lemma \ref{lemma plus minus components},
\begin{equation}
\app(z)\sum_{k=1}^{\wn}\ucoeff{k}\big[\amp\apm z^{-(n+\wn)}\big[\amm\app z^{n+\wn-k}\big]_-\big]_+(z)=\app(z)\sum_{k=1}^{\wn}\ucoeff{k}O(\rho_1^{2n})=O\big( (\rho_1/\rho)^2\big), \quad \textrm{on }\rho\leq |z|\leq\rho^{-1} .
\end{equation}
It follows
\begin{equation}\label{obtained u 1}
\uw z^{-\wn}=-\etm{(n-1)}\app(z)\frac{[\apm z^{-\wn}]_+(z)-[\apm z^{-\wn}]_+(\et{})}{z-\et{}} +\app(z)\sum_{k=1}^{\wn}\ucoeff{k} z^{-k} +O(\sigma^n),
\end{equation}
where $\sigma=\max\{(\rho_1/\rho)^2,\rho\} $. Then \eqref{obtained u 1} is consistent with the starting assumption \eqref{assumption}, while assumption \eqref{if satisfy} will be checked below for its consistency.

Finally, $\xw$ is computed using \eqref{w x 1}. The first term in \eqref{w x 1} is found from \eqref{Y explicitly} and \eqref{step y a1}, using
\begin{equation}
\big[\amm\yw\big]_+=\amm\yw-\big[\amm\yw\big]_- \; .
\end{equation}
We get
\begin{equation}\label{ammy}
\big[\amm\yw\big]_+(z)=
\etm{(n-1)}\frac{\amm(z)z^n-\amm(\et{})\et{n}}{z-\et{}}+O(\rho^n).
\end{equation}
The second term in \eqref{w x 1} is found from \eqref{step u a} and \eqref{obtained u 1},
\begin{eqnarray}
\big[\amm\uw z^{n}\big]_+ & = &\amm\uw z^{n}-\big[\amm\uw z^{n}\big]_-
\nonumber \\
& = & -\etm{(n-1)}\amm(z) \app(z) z^{n+\wn} \frac{\apm(z) z^{-\wn}- \apm (\et{}) \etm{\wn}}{z-\et{}} 
\nonumber \\
&&+\etm{(n-1)}\amm(z)\app(z)z^{n+\wn}\sum_{k=0}^{\wn-1}(\apm)_k\frac{z^{k-\wn}-\et{(k-\wn)}}{z-\et{}}
\nonumber \\
&& + \left[ \app\amm \sum_{k=1}^{\wn}\ucoeff{k}z^{n+\wn-k} \right]_+ (z) +O(\sigma^n) \; ,
\label{ammUz}
\end{eqnarray}
where we used
\begin{equation}
\big[\apm z^{-\wn}\big]_+=\apm z^{-\wn}-\sum_{k=0}^{\wn-1} (\apm)_kz^{k-\wn} \; .
\end{equation}

Now, summing \eqref{ammy} and \eqref{ammUz} in \eqref{w x 1} gives
\begin{equation}\label{obtained x 1} 
\begin{split}
\xw(z)z^\wn=&\etm{(n+\wn-1)}\apm(z)\apm(\et{}) \frac{\app(z)\amm(z)z^{n+\wn}-\app(\et{})\amm(\et{})\et{(n+\wn)}}{z-\et{}}\\
&+\etm{(n-1)}\amm(z)z^{n+\wn}\sum_{k=0}^{\wn-1}(\apm)_k\frac{z^{k-\wn}-\et{(k-\wn)}}{z-\et{}}+\apm(z)\sum_{k=1}^{\wn}\ucoeff{k}\big[\amm\app z^{n+\wn-k}\big]_+(z)+O(\sigma^n) \; .
\end{split}
\end{equation}

It remains to determine the coefficients $\ucoeff{1},\ucoeff{2},...,\ucoeff{\wn}$ from requirement \eqref{requirement} and to see whether \eqref{if satisfy} is satisfied. We compute the coefficients $(\xw z^{-\wn})_j$ by \eqref{Laurent coefficients}, integrating at $|w|=\rho$. All the terms in \eqref{obtained x 1} containing the factor $z^n$ result in $O(\rho^n)$ corrections, while
\begin{equation}
\frac{1}{2\pi\imath}\oint_{|w|=\rho}\frac{\apm(w)}{w-\et{}}\frac{dw}{w^{j+1}}=-\etm{}\sum_{k=0}^{j} \big(\apm\big)_k \etm{(j-k)}.
\end{equation}
It follows
\begin{equation}
\big(\xw z^{\wn}\big)_j=\sum_{k=1}^{\wn} \ucoeff{k}\Big(\apm\big[c z^{n+\wn-k}\big]_+\Big)_j+\amm(\et{})\sum_{k=0}^{j} \big(\apm\big)_k \etm{(j-k)}+O(\sigma^n),
\end{equation}
where $c=\app \amm$.

Thus if the coefficients $\ucoeff{1},\ucoeff{2},...,\ucoeff{\wn}$ satisfy
\begin{equation}\label{step equations}
0=\sum_{k=1}^{\wn} \ucoeff{k}\Big(\apm\big[c z^{n+\wn-k}\big]_+\Big)_j+\amm(\et{})\sum_{k=0}^{j} \big(\apm\big)_k \etm{(j-k)},\quad \textrm{for }j=0,1,...,\wn-1,
\end{equation}
then
\begin{equation}
\big(\xw z^\wn\big)_j=O(\sigma^n), \quad \textrm{for }j=0,1,...,\wn-1.
\end{equation}
Using
\begin{equation}
\Big(\apm\big[c z^{n+\wn-k}\big]_+\Big)_j=\sum_{m=0}^{j}\big(\apm\big)_{m} c_{j-m-n-\wn+k}
\end{equation}
it's easy to see that \eqref{step equations} is equivalent to
\begin{equation}\label{step equations 1}
\sum_{k=1}^{\wn}\ucoeff{k}c_{j-n-\wn+k}=-\amm(\et{})\etm{j}, \quad \textrm{for }j=0,1,...,\wn-1.
\end{equation}
The set of equations \eqref{step equations 1} is solved by Cramer's rule. The solution is
\begin{equation}
\ucoeff{j}=-\amm(\et{})\etm{(\wn-1)}\frac{\tilde{\Delta}_{\wn,n}(j)}{\Delta_{\wn,n}}
\end{equation}
where $\Delta_{\wn,n}$ and $\tilde{\Delta}_{\wn,n}(j)$ are defined by \eqref{Delta} and \eqref{Delta tilde l} respectively. We see that the condition \eqref{assumption} of Theorem \ref{tm non-zero winding} ensures that the assumption \eqref{if satisfy} is satisfied.

The solution of equations \eqref{w x 1}, \eqref{w v} and \eqref{w u} we have found is consistent with assumptions \eqref{assumption} and \eqref{if satisfy}, that we have made to find it, up to $O(\sigma^n)$ terms. On the basis of this solution we construct the functions $\xw_1,\uw_1$ and $\vw_1$ discussed in sections \ref{sec zero winding} and \ref{sec non zero winding}.

\section{Remarks on rigor} \label{Appendix rigor}

To prove theorems \ref{tm zero winding} and \ref{tm non-zero winding} we have used the intuitive property that a small perturbation to the functional equation \eqref{equivalent problem app} results only in a small perturbation to the solutions. Here we show it rigorously for the case of a zero winding number of the symbol, and thus make Theorem \ref{tm zero winding} rigorous.

The result we need is given by Lemma \ref{lemma rigor}, that we are going to state and prove. The proof of the lemma uses a similar procedure to the one used in the proof of Szeg\H{o} theorem in chapter X of \cite{TheTwo-DimensionalIsingModel}, based on the Wiener-Hopf equations. Let us note that the complication with non-zero winding number is the presence of the third term in \eqref{w u} and we omit this case.

Before introducing the lemma let us introduce two norms for functions analytic on an annulus around the unit circle. The first one is the supremum norm on the unit circle. Let $g$ be analytic on the annulus $\rho_-<|z|<\rho_+$ that includes the unit circle. We define
\begin{equation}
\norm{g}=\sup_{z\in\mathbb{C}:|z|=1} |g(z)| .
\end{equation}
The second norm is defined as the sum of the absolute values of Laurent series coefficients of $g$, which is well defined since the Laurent series is absolutely convergent in the interior of its annulus. If $g(z)=\sum_{j=-\infty}^{\infty}g_j z^j$, we denote
\begin{equation}
\normb{g}=\sum_{j=-\infty}^{\infty}|g_j|.
\end{equation}
We have clearly
\begin{equation}
\norm{g}\leq\normb{g}.
\end{equation}

Let us discuss the properties of the norms related to the components \eqref{components definition}. From their integral representation \eqref{plus minus integral representation} we have
\begin{equation}\label{components supremum norm}
\norm{[g]_-}\leq \frac{\rho_1}{1-\rho_1}\sup_{z:|z|=\rho_1}\{|g(z)|\}, \qquad \norm{[g]_+}\leq \frac{\rho_2}{\rho_2-1}\sup_{z:|z|=\rho_2}|g(z)|,
\end{equation}
for any choice $\rho_1\in(\rho_-,1),\rho_2\in(1,\rho_+)$. On the other hand, the second norm clearly satisfies
\begin{equation}
\normb{[g]_-}\leq \normb{g}, \qquad \normb{[g]_+}\leq \normb{g}.
\end{equation}

If $g$ and $h$ are two functions analytic on an annulus around the unit circle we have
\begin{equation}
\norm{gh}\leq \norm{g}\norm{h}, \qquad \normb{gh}\leq \normb{g}\normb{h},
\end{equation}
where the first inequality is trivial and the second is proven easily using the absolute convergence of the Laurent series inside the annulus. Finally, for a sequence of functions $(g^{(n)})_{n\in\mathbb{N}}$ analytic on an annulus around the unit circle and a sequence of positive numbers $(s_n)_{n\in\mathbb{N}}$ we have clearly that
\begin{equation}
g^{(n)}(z)=O(s_n) \quad \textrm{uniformly in $z$ on the unit circle $|z|=1$}  \quad \textrm{if and only if} \quad \norm{g^{(n)}}=O(s_n).
\end{equation}

\begin{lemma}\label{lemma rigor}
	Let $a$ be non-zero analytic function on an annulus around the unit circle, defined by \eqref{rho plus minus}, and with a zero winding number. Let $\yw_j$, for $j=1,2$ and $n\in\mathbb{N}$, be polynomials of degree not greater than $n-1$, i.e. analytic functions with
	\begin{equation}
	\big[Y_j^{(n)}\big]_-=0,\quad \big[Y_j^{(n)}z^{-n}\big]_+=0,
	\end{equation}
	and let $X_j^{(n)},U_j^{(n)},V_j^{(n)}$, for $j=1,2$, be analytic solutions of the functional equation
	\begin{equation}
	aX_j^{(n)}=Y_j^{(n)}+U_j^{(n)}z^n+V_j^{(n)}
	\end{equation}
	on the annulus, such that they satisfy the properties 
	\begin{equation}\label{properties App}
	\big[X_j^{(n)}\big]_-=0,\quad \big[X_j^{(n)}z^{-n}\big]_+=0,\quad \big[U_j^{(n)}\big]_-=0, \quad \big[V_j^{(n)}\big]_+=0 \; .
	\end{equation}
	If
	\begin{equation}
	Y_1^{(n)}(z)-Y_2^{(n)}(z)=O(s_n) \quad \textrm{uniformly in $z$ on the unit circle $|z|=1$}
	\end{equation}
	then
	\begin{equation}\label{x lemma result}
	X_1^{(n)}(z)-X_2^{(n)}(z)=O(ns_n) \quad \textrm{uniformly in $z$ on the unit circle $|z|=1$.}
	\end{equation}
	
\end{lemma}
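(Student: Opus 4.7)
\noindent\textbf{Proof plan for Lemma \ref{lemma rigor}.}
By linearity the differences $\Delta X=X_1^{(n)}-X_2^{(n)}$, $\Delta Y=Y_1^{(n)}-Y_2^{(n)}$, $\Delta U=U_1^{(n)}-U_2^{(n)}$, $\Delta V=V_1^{(n)}-V_2^{(n)}$ still satisfy the functional equation $a\,\Delta X=\Delta Y+\Delta U\,z^{n}+\Delta V$ and the projection conditions \eqref{properties App}. Since the winding number of $a$ is zero, specializing the Wiener-Hopf argument of Appendix \ref{appendix solution} to $\wn=0$ produces the explicit representations \eqref{w x 1}, \eqref{w v}, \eqref{w u}, \eqref{w x 2} for $\Delta X$, $\Delta V$, $\Delta U$, $\Delta X z^{-n}$, of which the central one is
\begin{equation}
\Delta X=a_+^{-1}\big[a_-^{-1}\Delta Y\big]_+ + a_+^{-1}\big[a_-^{-1}\Delta U\,z^{n}\big]_+ \; .
\end{equation}

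Two elementary facts about $\Delta Y$ are essential. Since $\Delta Y$ is a polynomial of degree at most $n-1$ with $\norm{\Delta Y}\leq Cs_n$, the maximum modulus principle gives $|\Delta Y(w)|\leq Cs_n$ for $|w|\leq 1$ and $|\Delta Y(w)|\leq Cs_n|w|^{n-1}$ for $|w|\geq 1$, while each of its at most $n$ Fourier coefficients is bounded by $\norm{\Delta Y}$, so $\normb{\Delta Y}\leq Cns_n$. The second fact is the source of the factor $n$ in the conclusion. Fix $\rho_-<\rho_1<1<\rho_2<\rho_+$ and set $A=\sup_{|z|=\rho_2}|\Delta U(z)|$, $B=\sup_{|z|=\rho_1}|\Delta V(z)|$. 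Inserting the integral formulas \eqref{plus minus integral representation} into the right-hand sides of \eqref{w u} and \eqref{w v}, with contour radii $\rho_2'\in(\rho_2,\rho_+)$ for the $[\,\cdot\,]_+$ projections and $\rho_1'\in(\rho_-,\rho_1)$ for the $[\,\cdot\,]_-$ projections, and using the two facts above, one obtains the coupled bounds
\begin{equation}
A\leq C s_n + C(\rho_2')^{-n}\,B ,\qquad B\leq C s_n + C(\rho_1')^{n}\,A \; .
\end{equation}
The polynomial growth of $\Delta Y$ on $|w|=\rho_2'$ is absorbed by $w^{-n}$, and the cross terms pick up the small factors $(\rho_2')^{-n}$ or $(\rho_1')^{n}$ from $w^{\pm n}$. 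For $n$ large enough $(\rho_1'/\rho_2')^n$ is arbitrarily small, so the $2\times 2$ system is invertible and yields $A,B=O(s_n)$.

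Since $[\Delta U]_-=0$, writing $\Delta U=\sum_{k\geq 0}u_k z^k$ and applying Cauchy's estimate on $|w|=\rho_2$ gives $|u_k|\leq A\rho_2^{-k}$ and hence $\normb{\Delta U}\leq A\rho_2/(\rho_2-1)=O(s_n)$. Taking the $\normb{\cdot}$ norm of the representation of $\Delta X$ above, and using $\normb{[\,\cdot\,]_{\pm}}\leq\normb{\cdot}$, $\normb{gh}\leq\normb{g}\normb{h}$, and $\normb{z^n}=1$, one gets
\begin{equation}
\norm{\Delta X}\leq\normb{\Delta X}\leq C\big(\normb{\Delta Y}+\normb{\Delta U}\big)=O(ns_n)+O(s_n)=O(ns_n) \; ,
\end{equation}
which is \eqref{x lemma result}.

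The main obstacle is that $\Delta U$ and $\Delta V$ admit no direct a priori bound: a naive sup-norm estimate of $[a_-^{-1}\Delta U\,z^n]_+$ on the unit circle via an outer contour of radius $\rho_2'>1$ would cost a growing factor $(\rho_2')^n$. The coupled bootstrap on two auxiliary circles, one inside and one outside the unit circle, bypasses this because the factors $w^{\pm n}$ are then small on whichever contour is actually used. Once this bound is in place, the sup estimate on $\Delta U$ is promoted to an $\normb{\cdot}$ estimate cheaply via Cauchy, and the bookkeeping $\normb{\Delta Y}\leq Cns_n$ accounts for the unavoidable factor of $n$ in the final conclusion.
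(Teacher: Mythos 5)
Your proof is correct and follows essentially the same route as the paper's: both pass to the decoupled Wiener--Hopf representations for the differences, bootstrap a coupled pair of inequalities for $U$ and $V$ in which the cross terms carry exponentially small factors from $z^{\pm n}$, invert that $2\times 2$ system for large $n$, and then read off the bound on $X$. The only substantive difference is bookkeeping: you measure $\Delta U,\Delta V$ on circles $\rho_1<1<\rho_2$ and use the sharper polynomial estimate $|\Delta Y(w)|\leq Cs_n\max(1,|w|)^{n-1}$, which yields the stronger intermediate bound $O(s_n)$ for $U$ and $V$ (the paper gets $O(ns_n)$ there via $\normb{\Delta Y}\leq n\norm{\Delta Y}$), the factor $n$ entering only in your final step --- in fact, combining your $O(s_n)$ bounds with the paper's concluding inequality, which reads $\norm{X^{(n)}}$ directly off the functional equation, would even give $O(s_n)$ in \eqref{x lemma result}.
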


\begin{proof}
 Defining
	\begin{equation}
	Y^{(n)}=Y_1^{(n)}-Y_2^{(n)}, \quad X^{(n)}=X_1^{(n)}-X_2^{(n)}, \quad U^{(n)}=U_1^{(n)}-U_2^{(n)}, \quad V^{(n)}=V_1^{(n)}-V_2^{(n)},
	\end{equation}
	proving the lemma becomes equivalent to showing that if $\norm{Y^{(n)}}=O(s_n)$ then the solution $X^{(n)}$ of the problem \eqref{equivalent problem app} (for $\wn=0$) and \eqref{properties app} satisfies $\norm{X^{(n)}}=O(ns_n)$.
	
	The lines of the proof are the following. First we use the Wiener-Hopf equations \eqref{w v} and \eqref{w u} to show that $\norm{U^{(n)}}=O(ns_n)$ and $\norm{V^{(n)}}=O(ns_n)$. Then we recognize that directly from \eqref{equivalent problem app} and the properties of the norm it follows
	\begin{equation}\label{ineq y u v}
	\norm{X^{n}}\leq \norm{a^{-1}}\Big(\norm{\yw}+\norm{\uw}+\norm{\vw}\Big),
	\end{equation}
	and conclude that since the right hand side of the inequality is $O(ns_n)$, so is the left side.
	
	We now work out the details. Note first that the Laurent series coefficients $y_j^{(n)}$ of the function $Y^{(n)}(z)=\sum_{j=0}^{n-1}y_j^{(n)}z^j $ satisfy $|y_j^{(n)} |\leq\norm{Y^{(n)}}$, which can be seen easily from the integral representation of the coefficients. It follows
	\begin{equation}\label{two norms Y}
	\normb{Y^{(n)}}\leq n\norm{Y^{(n)}}.
	\end{equation}
	From the Wiener-Hopf equation \eqref{w v} we get
	\begin{equation}\label{v bound}
\norm{\vw}=\norm{\amp} \norm{\big[\amm\yw\big]_-} +\norm{\amp} \norm{\big[\amm\uw z^{n}\big]_-}.
	\end{equation}
	To bound the first term we notice
	\begin{equation}
	 \norm{\big[\amm\yw\big]_-}\leq \normb{\big[\amm\yw\big]_-} \leq \normb{\amm\yw}\leq \normb{\amm}\normb{\yw} \leq \normb{\amm} n \norm{\yw},
	\end{equation}
	where all the inequalities except the last follow simply from the discussed properties of the norms, and the last one follows from \eqref{two norms Y}. To bound the second term in \eqref{v bound} we use \eqref{components supremum norm} with $\rho_1=\rho$ for some $\rho$ defined by \eqref{rho definition}. We have
	\begin{equation}
	\norm{\big[\amm\uw z^{n}\big]_-}\leq \frac{\rho^{n+1}}{1-\rho}	\sup_{z:|z|=\rho}\{|\amm (z)  U^{(n)}(z)|\}.
	\end{equation}
	Since the Laurent series coefficients of $\uw$ define a function analytic inside the whole circle $|z|<\rho^{-1}$ we can apply the maximum modulus principle to conclude
	\begin{equation}
	\sup_{z:|z|=\rho}\{|U^{(n)}(z)|\}\leq \sup_{z:|z|=1}\{|U^{(n)}(z)|\}=\norm{U^{(n)}}.
	\end{equation}
	It follows
	\begin{equation}
	\norm{\big[\amm\uw z^{n}\big]_-}\leq \frac{\rho^{n+1}}{1-\rho}	\sup_{z:|z|=\rho}\{|\amm(z) |  \} \norm{\uw}.
	\end{equation}
	
	We conclude that there are positive constants $\lambda_1$ and $\lambda_2$ (independent of $n$) such that
	\begin{equation}
	\norm{\vw}\leq \lambda_1 n \norm{\yw }+\lambda_2 \rho^{n}\norm{\uw}.
	\end{equation}
	Using the same methods we conclude from the Wiener-Hopf equation \eqref{w u} that there are positive constants $\lambda_3$ and $\lambda_4$ such that
	\begin{equation}\label{ineq 2}
	\norm{\uw}\leq \lambda_3 n \norm{\yw }+\lambda_4 \rho^{n}\norm{\vw}.
	\end{equation}
	
	We have obtained a system of two inequalities. Inserting the second into the first, and rearranging the terms, we get
	\begin{equation}
	(1-\lambda_2 \lambda_4 \rho^{2n})\norm{\vw}\leq (\lambda_1+\lambda_2\lambda_3 \rho^n )n\norm{\yw}.
	\end{equation}
	For sufficiently large $n$ the factor on the right is positive and greater than, say, $1/2$, so we conclude
	\begin{equation}
	\norm{\vw}=O(n s_n).
	\end{equation}
	Using \eqref{ineq 2} again we get also
	\begin{equation}
	\norm{\uw}=O(n s_n).
	\end{equation}
	Finally, from \eqref{ineq y u v} we get then $\norm{\xw}=O(n s_n)$, which completes the proof.
\end{proof}

In applying Lemma \ref{lemma rigor} for functions $\yw_2$ and $\yw$ in section \ref{sec zero winding} we have $s_n=\rho^n$, for any $\rho$ defined by \eqref{rho definition}. Note that since $\rho$ can always be made smaller, the factor $n$ in $O(ns_n)$ in \eqref{x lemma result} is irrelevant.

\end{appendices}

\bibliographystyle{hunsrt}
\bibliography{bibliography}

\end{document}